\DeclareMathSymbol{\lsb@l}{\mathalpha}{letters}{`l}
\setlist[description]{font=\normalfont\itshape,itemsep=0ex,partopsep=0ex}
\newtheorem{thm}{Theorem}
\newtheorem{prop}[thm]{Proposition}
\newtheorem{lem}[thm]{Lemma}
\newtheorem{coro}[thm]{Corollary}
\newtheorem{remark}[thm]{Remark}
\newenvironment{algoenv}[3][\linewidth]{
\begin{minipage}{#1}%
\flushleft
\rule{\textwidth}{.08em}\vspace{-\baselineskip}\smallskip
\begin{description}[noitemsep]
\item[\rlap{Input}\phantom{Output}] #2
\item[Output] #3
\end{description}
\vspace{-\baselineskip}
\rule{\textwidth}{.05em}
\begin{algorithmic}
}{\end{algorithmic}
\vspace{-.5\baselineskip}
\rule{\textwidth}{.08em}
\end{minipage}}
\newcommand{\Reduce}{\mathop{\textsc{Reduce}}}
\newcommand{\tReduce}{\textsc{Reduce}}
\newcommand{\tCreatTel}{\textsc{Telesc}}
\newcommand{\CreatTel}{\mathop{\textsc{\tCreatTel}}}
\newcommand{\mop}[1]{\operatorname{#1}}
\newcommand{\Lx}{L[\mathbf x]}
\newcommand{\Lwx}{L(\varepsilon)[\mathbf x]}
\DeclareMathOperator{\Jac}{Jac}
\newcommand{\ud}{\mathrm{d}}
\DeclareMathOperator{\id}{id}
\newcommand{\softO}{\tilde{\mathcal{O}}}
\newcommand{\bigO}{{\mathcal{O}}}
\newcommand{\PP}{\mathbb{P}}
\DeclareMathOperator{\Mat}{Mat}
\newcommand{\xx}{\mathbf{x}}
\newcommand{\QQ}{\mathbb{Q}}
\newcommand{\CC}{\mathbb{C}}
\newcommand{\pr}{{\smash{\mathrm{pr}}}}
\newcommand{\aff}{{\smash{\mathrm{aff}}}}
\newcommand{\reg}{{\smash{\mathrm{reg}}}}
\newcommand{\Hfp}{H_f^{\pr}}
\newcommand{\Lxfp}{\Lxf_{\smash{-n-1}}}
\newcommand{\Lxf}{L[\xx,\smash{\tfrac{1}{f}}]}
\def\le{\leqslant}
\newcommand{\eexp}{\mathrm{e}}
\newcommand{\GD}{Griffiths--Dwork}
\title{Creative Telescoping for Rational Functions \\ Using the Griffiths--Dwork Method\titlenote{This work has been supported in part by the Microsoft Research\,--\,Inria Joint Centre.}}
\author{
  \alignauthor
  Alin Bostan\\
  \affaddr{Inria (France)}\\
  \email{alin.bostan@inria.fr}
  \alignauthor
  Pierre Lairez\\
  \affaddr{Inria (France)}\\
  \email{pierre.lairez@inria.fr}
  \alignauthor
  Bruno Salvy\\
  \affaddr{Inria (France)}\\
  \email{bruno.salvy@inria.fr}
}
\begin{document}
\clubpenalty=10000 
\widowpenalty = 10000

\setlength{\belowdisplayskip}{.3\baselineskip} \setlength{\belowdisplayshortskip}{0pt}
\setlength{\abovedisplayskip}{.36\baselineskip} \setlength{\abovedisplayshortskip}{0pt}

\maketitle

\begin{abstract}
Creative telescoping algorithms compute linear differential equations satisfied by multiple integrals with parameters.
We describe a precise and elementary algorithmic version of the Griffiths--Dwork method for the creative telescoping of rational functions. 
This leads to bounds on the order and degree of the coefficients of the differential equation, and to the first complexity result which is single exponential in the number of variables.
One of the important features of the algorithm is that it does not need to compute certificates. The approach is  vindicated by a prototype implementation.
\end{abstract}

\vspace{1mm}
\noindent
{\bf Categories and Subject Descriptors:} \\
\noindent I.1.2 [{\bf Computing Methodologies}]: Symbolic and
Algebraic Manipulations --- \emph{Algebraic Algorithms}

\vspace{1mm}
\noindent {\bf General Terms:} Algorithms, Theory.

\vspace{1mm}
\noindent {\bf Keywords:} Integration, creative telescoping, algorithms, complexity, Picard-Fuchs equation, Griffiths--Dwork method

\section{Introduction}

\noindent In computer algebra, \emph{creative telescoping} is an approach introduced by Zeilberger to address definite summation and integration of a large class of functions and sequences~\cite{Zei90,Zei91,WilZei92}. %
Its vast scope includes the computation of differential equations for multiple integrals  of rational or algebraic functions with parameters.
Within this class, creative telescoping is similar to well-studied older approaches whose key notion is the Picard--Fuchs differential equation, see \emph{e.g.}~\cite{Pic99}.

We study the multivariate rational case: Given a rational function~$F(t,x_1,\dots,x_n)$, we aim at finding~$n$  rational functions~$A_i(t,x_1,\dots,x_n)$ and a differential operator~$T$ with polynomial coefficients, say~$\sum_{j=0}^r c_j(t) \partial_t^j$, such that
\begin{equation}\label{eqn:prob-ct}
T(F)\stackrel{\text{def}}{=}\sum_{j=0}^{\smash{r}} c_j(t) \partial_t^jF = \sum_{i=1}^{\smash{n}} \partial_i A_i,
\end{equation}
where~$\partial_t^j$ denotes~$\smash{\frac{\partial^j}{\partial t^j}}$ and~$\partial_i$ denotes~$\smash{\frac{\partial}{\partial x_i}}$.
The operator~$T$ is a \emph{telescoper} and the tuple~$(A_1,\dotsc,A_n)$ is a \emph{certificate} for~$T$.
The integer~$r$ is the \emph{order} of~$T$ and $\max_j \deg c_j$ is its \emph{degree}. 

Throughout the article, the constant field $k$ of $F$ is assumed to be of characteristic zero.
Under suitable additional hypotheses, $T(I)=0$ is a differential equation satisfied by an integral~$I(t) = \int F \ud x$ over a domain~$\gamma$, without boundaries, where~$F$ has no pole.
A misbehavior may occur when the certificate has poles outside those of~$F$:
it may not be possible to integrate term by term the right-hand side of Equation~\eqref{eqn:prob-ct},
see~\S\ref{sec:regular-certificate}.
The certificate is called \emph{regular} when it does not contain poles other than those of~$F$.
For integration, there is no need to compute the certificate provided that it is regular. 

Several methods are known that can find a telescoper and the corresponding certificate~\cite{Lip88,Takayama1990a,Chy00,Kou10}.
However, the practical cost of using these methods in multivariate problems remains high and a better understanding of the size or complexity of the objects of creative telescoping is clearly needed.
The present work is part of the on-going
effort in this direction~\cite{ApaZei06,BosCheChy10,CheKauSin12}.
The study of the rational case is motivated both by its fundamental nature and by its applications to the computation of diagonals in combinatorics, number theory and physics~\cite{Lip88,Chr85,Mor92}.
The rational case with~$n$ variables also includes the algebraic case with~$n-1$ variables~\cite{CheKauSin12}.

\medskip
\noindent\emph{Previous works.}
An obviously related problem is, given a rational function~$F(x_1,\dotsc,x_n)$, to decide whether there exist rational functions $A_1,\dotsc,A_n$ such that~$F$ equals~$\smash{\sum_{i=1}^n \partial_i A_i}$.

When~$n=1$, this question is easily solved by Hermite reduction.
This is the basis of an algorithm for creative telescoping~\cite{BosCheChy10} that we outline in~\S\ref{sec:hermite}.
Picard~\cite[chap.~7]{PicSim} gave methods when~$n=2$ from which he deduced that a telescoping equation exists in that case~\cite{Pic02}. This too has led to an algorithm~\cite{CheKauSin12}. 
The Griffiths--Dwork method~[\citen{Dwo62}, \S 3; \citen{Dwo64}, \S 8; \citen{Gri69}] solves the problem for a general~$n$, in the setting of de~Rham cohomology and under a regularity assumption.
The method can be viewed as a generalization of Hermite reduction.
Independently, Christol used a similar method to prove that diagonals of rational functions, under a regularity hypothesis, are differentially finite~\cite{Chr83}; then he applied a deformation technique, for which he credits Dwork, to handle singular cases~\cite{Chr85}.
The Griffiths--Dwork method is also used in point counting~\cite{AbbKedRoe10,Gerkmann07} and the study of mirror maps~\cite{Mor92}.

In terms of complexity, in more than two variables, not 
much is known.
If a rational function~$F(t,x_1,\dotsc,x_n)$ has degree~$d$, a study of Lipshitz's argument~\cite{Lip88} shows that there exists a telescoper of order and degree~$d^{O(n)}$ with a regular certificate of size~$d^{\smash{O(n^2)}}$.
Most algorithms~\cite{Lip88,Zei90,Takayama1990a,Chy00,ApaZei06,Kou10} cannot avoid the computation of the certificate, which impacts their complexity.
The complexity of Lipshitz's algorithm is~$d^{\smash{O(n^2)}}$ operations in~$k$; the complexity of no other algorithm is known.
Pancratz~\cite{Pan10} developed an approach similar to ours, under a restrictive hypothesis, much stronger than Griffiths' regularity assumption.
He proceeds to a complexity analysis of his algorithm but in terms of operations in~$k(t)$ rather than in the base field~$k$.
Algorithms based on non-commutative Gröbner bases and elimination~\cite{Zei90,Takayama1990a} or based on the search of rational solutions to differential equations~\cite{Chy00} resist complexity analysis.
The method of Apagodu and Zeilberger~\cite{ApaZei06} requires a generic exponent and %
specialization seems problematic.

For the restricted class of diagonals of rational functions, there is a heuristic based on series expansion and differential approximation~\cite{KauZei11}; it does not need to compute a certificate.
However, even using the bounds in~$d^{\smash{O(n)}}$, its direct implementation has a complexity of~$d^{\smash{O(n^2)}}$ operations in~$k$.

\medskip
\noindent\emph{Contributions.}
Our main result, obtained with the \GD\ method and a deformation technique, is the existence of a telescoper with regular certificate of order at most $d^n$ and degree~$d^{O(n)}$ that can be computed in~$d^{O(n)}$ arithmetic operations in~$k$.
For generic homogeneous rational functions, the telescoper computed is the minimal order telescoper with regular certificate.
Theorems~\ref{coro:asy-deg},~\ref{thm:complexity-reg} and~\ref{thm:bound-affine} state precise complexity and size estimates. 
To the best of our knowledge, the bounds on the order and degree are better than what was known and it is the first time that a complexity single exponential in~$n$ is reached.
For a generic rational function, every pair (telescoper, regular certificate) has a size larger than~$d^{O(n)}$, see Remark~\ref{rem:size-cert}, but our algorithm does not need to compute the certificate.
A prototype implementation shows that this algorithm can lead
to a spectacular improvement over previous methods, though the domain of improvement is not  satisfactory yet.

\medskip
\noindent\emph{Acknowledgement.}
We are grateful to G.~Christol for
many rewarding discussions, and we thank G.~Villard and W.~Zhou for communicating  their complexity results in linear algebra.

\section{Overview of the method}

\noindent In this section we introduce the basics of the Griffiths--Dwork method.
In dimension~1, this method coincides with classical Hermite reduction, which we first recall.

\subsection{Dimension one: Hermite reduction}\label{sec:hermite}

\noindent Let~$F$ be a rational function in~$x$, over a field~$L$, written as~$a/f^\ell$, with~$a$ and~$f$ two polynomials not necessarily coprime, the latter being square-free, \emph{i.e.} the polynomials~$\partial_x f$ and~$f$ are coprime.
In particular~$a$ equals~$uf+v \partial_x f$ for some polynomials~$u$ and~$v$.
Then, if~$\ell>1$, the function $F$ rewrites 
\[ F  = \frac{u+ \tfrac{1}{\ell-1} \partial_x v}{f^{\ell-1}}+\partial_x\left( \frac{-v}{(\ell-1)f^{\ell-1}} \right). \]
Iterating this reduction step $\ell$ times gives~$F$ as~$\frac{U}{f}+\partial_x\frac{V}{f^{\ell-1}}$
for some polynomials~$U$ and~$V$.
Next, Euclidean division allows to write~$U$ as~$r+sf$, with~$r$ of degree less than the degree of~$f$, yielding the additive decomposition
\[ F = \frac{r}{f} + \partial_x\left(\frac{V}{f^{\ell-1}}+ {\textstyle \int}s \right). \]
The rational function~$r/f$ is the \emph{reduced form} of~$F$ and is denoted by~$[F]$.
This form features important properties:
\begin{description}
  \item[\normalfont\emph{(Linearity)}] $f$ being fixed, $[F]$ depends linearly on~$F$;
  \item[\normalfont\emph{(Soundness)}] if $[F]$ is zero, then~$F$ is a derivative w.r.t.~$x$;
  \item[\normalfont\emph{(Confinement)}] $[F]$ lies in a finite-dimensional vector space over~$L$ depending only on~$f$ (with dimension~$\deg_x f$);
  \item[\normalfont\emph{(Normalization)}] if~$F$ is a derivative w.r.t.~$x$, then~$[F]$ is zero.
\end{description}
These properties are enough to compute a telescoper:
Assume now that~$L$ is~$k(t)$ for a field~$k$.
If for some elements of~$L$, say~$a_0,\dotsc,a_p$, the reduced form~$\left[\sum_i a_i \partial^i_t F\right]$ vanishes, then the operator~$\sum_i a_i\partial_t^i$ is a telescoper, thanks to the soundness property.
Thanks to the linearity property, this is equivalent to the vanishing of~$\sum_i a_i \left[\partial^i_t F\right]$.
Thanks to the confinement property, it is always possible to find such a relation.
Thanks to the normalization property, every telescoper arises in this way. In particular, so does the  telescoper of minimal order.

\subsection{Griffiths--Dwork reduction}
\noindent %
Let~$F$ be a rational function in~$n$ variables~$x_1,\dotsc,x_n$, written as~$a/f^\ell$, with~$f$ a square-free polynomial.
If~$\ell>1$ and if~$a$ lies in the ideal of~$L[x_1,\dots,x_n]$ generated by~$f$ and its derivatives~$\partial_i f$,
then we can write~$a$ as~$uf+\sum_iv_i\partial_if$,
for some polynomials $u,v_1,\ldots,v_n$, and~$F$ rewrites
\[
  F=  \frac{u+\tfrac{1}{\ell-1}\sum_{i=1}^n \partial_i v_i}{f^{\ell-1}} + \sum_{i=1}^n\partial_i\left( \frac{-v_i}{(\ell-1)f^{\ell-1}} \right).
\]
Provided that  this ideal %
contains~1,
any~$F$ can be reduced to a function with simple poles by iteration of this identity.
The soundness and linearity properties are naturally satisfied, but extending further the reduction
to obtain at least the confinement property is not straightforward and requires stronger assumptions~\cite[\S 4]{Mov11a}.
A difficulty with this approach is that the degrees of the cofactors~$v_i$ at each reduction step are poorly controlled: we lack the Euclidean division step and we reduce poles at finite distance at the cost of making worse the pole at infinity. This difficulty is overcome by working in the projective space. %
The translation between affine and projective is discussed more precisely in Section~\ref{sec:singular}.

Now, assume that~$a$ and~$f$ are \emph{homogeneous} polynomials in~$\Lx=L[x_0, \dots,x_n]$, with~$f$ of degree~$d$.
A central role is played by the \emph{Jacobian ideal $\Jac f$ of~$f$}, the ideal generated by the partial derivatives $\partial_0f,\dots,\partial_n f$.
Note that since~$f$ is homogeneous, Euler's relation, which asserts that~$f$ equals~$\frac1d  \sum_{i=0}^n x_i\partial_if$
implies that~$f\in\Jac f$. 

We now decompose~$a$ as~$r + \sum_i v_i \partial_i f$. In contrast with the
affine case, each nonzero~$v_i$ can be chosen homogeneous of degree precisely~$\deg a - \deg\partial_i f$.
If~$\ell>1$, we obtain
\begin{equation}\label{eqn:reduce-identity}
  F = \frac{r}{f^\ell} + \underbrace{\frac{\tfrac{1}{\ell-1}\sum_{i=0}^n \partial_i v_i}{f^{\ell-1}}}_{F_1}+\sum_{i=0}^n\partial_i\left( \frac{-v_i}{(\ell-1)f^{\ell-1}} \right).
\end{equation}
If~$r$ is not zero, the order of the pole need not decrease, contrary to the affine case,  
but~$r$ is reduced to a normal form modulo~$\Jac f$; this will help us obtain the confinement property, see Proposition~\ref{coro:hbasis}.
The reduction process proceeds recursively on~$F_1$,  which has pole order~$\ell-1$, and stops when~$\ell=1$.
This procedure is summarized in Algorithm~\ref{algo:reduce}.

\begin{algo}
\centering
\begin{algoenv}{
    $F = a/f^\ell$ a rational function in~$x_0,\dotsc,x_n$
  }{$[F]$ such that there exist rational functions~$A_0,\dotsc,A_n$ such that $F = [F] +\sum_i\partial_i A_i$}
  \State Precompute a Gr\"obner basis~$G$ for~$(\partial_0f,\dotsc,\partial_nf)$
  \Procedure{Reduce}{$a/f^\ell$}
  \If{$\ell=1$} \Return $a/f^\ell$
  \EndIf
  \State Decompose~$a$ as~$r + \sum_i v_i \partial_i f$ using~$G$
  \State $F_1 \gets \frac{1}{\ell-1}\sum_i\frac{\partial_iv_i}{f^{\ell-1}}$
  \State \Return $\frac{r}{f^\ell}+\Reduce(F_1)$
  \EndProcedure
\end{algoenv}
\caption{Griffiths--Dwork reduction}
\label{algo:reduce}
\end{algo}

\section{Properties of the Griffiths--\\Dwork reduction}
\label{sec:prop-grif}
\noindent Let~$f$ in~$L[\xx]$ be a homogeneous polynomial of degree~$d$, where~$L$ is a field of characteristic zero.
It is clear that the reduction procedure satisfies the soundness and the linearity properties.
Analogues of confinement and normalization hold under the following regularity hypothesis: 
\[ \tag{H}\label{eqn:regularity} \Lx/\!\Jac f\text{ is finite-dimensional over $L$}. \]
Geometrically, this hypothesis means that the hypersurface defined by~$f$ in~$\PP^n$ is smooth.
In particular~$f$ is irreducible.

The ring of rational functions in $L(\xx)$ whose denominator is a power of~$f$ is denoted by~$L[\xx,\tfrac{1}{f}]$.
Let~$\Lxf_p$ denote the subspace of homogeneous functions of degree~$p$, \emph{i.e.} the set of~$F$ in~$L[\xx,\tfrac{1}{f}]$ such that~$F(\lambda \xx)$ equals~$\lambda^p F(\xx)$.
Note that each derivation~$\partial_i$ induces a map from~$\Lxf_p$ to~$\Lxf_{p-1}$.
Let $D_f$ denote the subspace of~$L[\xx,\tfrac{1}{f}]$ consisting of rational functions~$\sum_i \partial_i A_i$ for some~$A_i$ in~$L[\xx,\tfrac{1}{f}]_{-n}$.
A major character of this study is the quotient space~$\Lxfp/D_f$, denoted by~$\Hfp$.

The reduced form of~$F$ in~$\Lxfp$ is denoted by~$[F]$.
It is by definition the output of the algorithm~\tReduce.
It depends on a choice of a Gröbner basis of~$\Jac f$, but its vanishing does not, see Theorem~\ref{thm:griffiths} below.

The choice of the space~$\Lxfp$ and the degree~$-n-1$ may seem arbitrary. It is motivated by it being isomorphic to the space of regular differential~$n$-forms on~$\PP^n\setminus V(f)$.
The evaluation of~$x_0$ to~$1$ is the restriction map to~$\mathbb{A}^n\setminus V(f)$.
The space~$\Hfp$ is the~$n$th de~Rham cohomology space of the algebraic variety~$\PP^n\setminus V(f)$ over~$L$.

\begin{thm}[Griffiths~{\cite[\S4]{Gri69}}]\label{thm:griffiths}
  \hspace{.5em} %
  If~$f$ satisfies Hypothesis~\eqref{eqn:regularity},
   then for all~$F$ in~$\Lxfp$,
 the reduced form~$[F]$ vanishes if and only if~$F$ is in~$D_f$.
\end{thm}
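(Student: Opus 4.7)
The soundness direction, $[F]=0\Rightarrow F\in D_f$, is immediate from the algorithm. Identity~\eqref{eqn:reduce-identity} rewrites each recursive input as $r/f^\ell+F_1+\sum_i\partial_i A_i$ with $A_i=-v_i/((\ell-1)f^{\ell-1})\in\Lxf_{-n}$, so the exact part $\sum_i\partial_iA_i$ lies in $D_f$. Composing all recursive steps yields $F-[F]\in D_f$; thus $[F]=0$ forces $F\in D_f$.

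For the converse, the plan is to induct on the pole order $\ell$ of a presentation $F=a/f^\ell$. The central ingredient is Griffiths' description of $\Hfp$ under hypothesis~\eqref{eqn:regularity}: filtering $\Hfp$ by the minimal pole order of a representative yields, for every $\ell\ge 1$, successive quotients isomorphic via $[a/f^\ell]\mapsto[a]$ to the graded piece $(\Lx/\Jac f)_{\ell d-n-1}$ of the Jacobian ring. This is the essence of~\cite{Gri69}, and its proof uses~\eqref{eqn:regularity} precisely because the latter forces $\partial_0f,\dots,\partial_nf$ to form a regular sequence, making the associated Koszul complex exact.

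Granting this isomorphism, the inductive step is short. If $F=a/f^\ell\in D_f$, then its image in the $\ell$th filtration quotient vanishes, so $a\in\Jac f$, and the Gröbner normal form $r$ computed by \tReduce{} is zero. Identity~\eqref{eqn:reduce-identity} then gives $F-F_1\in D_f$ with $F_1=(\ell-1)^{-1}\sum_i\partial_iv_i/f^{\ell-1}$ of pole order at most $\ell-1$, whence $F_1\in D_f$. By the induction hypothesis $[F_1]=0$, and therefore $[F]=r/f^\ell+[F_1]=0$.

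For the base case $\ell=1$ the same isomorphism identifies the pole-order-$1$ part of $\Hfp$ with $(\Lx/\Jac f)_{d-n-1}$. Since $\Jac f$ is generated in degree $d-1$, its intersection with $\Lx_{d-n-1}$ is trivial whenever $n\ge 1$; thus $a/f\in D_f$ forces $a=0$, and the algorithm returns $0$, as required. The principal obstacle in this plan is the residue isomorphism itself, which rests on a careful Koszul-complex argument; once it is in hand, both the induction and the algorithmic bookkeeping are routine.
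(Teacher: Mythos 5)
The paper does not actually prove this statement: it is quoted as Griffiths' theorem with only the citation to \cite{Gri69}, so there is no in-paper argument to compare yours against. Your soundness direction is correct and is exactly the observation the paper relies on elsewhere (each step of \tReduce{} discards an element of $D_f$, since the $A_i=-v_i/((\ell-1)f^{\ell-1})$ lie in $\Lxf_{-n}$). Your induction for the converse is also sound as bookkeeping: granting the filtration isomorphism, $F=a/f^\ell\in D_f$ forces $a\in\Jac f$, hence a vanishing normal form $r$ (for the Gr\"obner reduction as well as for the split $\pi_{\ell d}$), hence $F_1\in D_f$ with strictly smaller pole order; and the base case $\ell=1$ works because $\Jac f$ is generated in degree $d-1>d-n-1$.

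The gap is that the ``central ingredient'' you grant yourself --- injectivity of $a\mapsto[a/f^\ell]$ from $(\Lx/\Jac f)_{\ell d-n-1}$ into the $\ell$-th graded piece of the pole-order filtration on $\Hfp$ --- is not a lemma from which the theorem follows; it \emph{is} the nontrivial content of the theorem, restated cohomologically. (Well-definedness and surjectivity of that map come for free from identity~\eqref{eqn:reduce-identity}; its injectivity is equivalent to the normalization property you are trying to establish, as Proposition~\ref{coro:hbasis} makes explicit.) You correctly locate the real work in the exactness of the Koszul complex of $\partial_0 f,\dotsc,\partial_n f$ under Hypothesis~\eqref{eqn:regularity}, but you do not carry it out, so as a standalone proof the hard direction is deferred rather than proved. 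That is the same choice the paper makes by citing \cite{Gri69}; just be aware that what you have written is a correct translation between the algorithmic and the cohomological formulations of the result, not an independent proof of either.
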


Theorem~\ref{thm:griffiths} gives access to the dimension of~$\Hfp$.
Let~$A$ be the finite dimensional vector space~$\Lx/\!\Jac f$.
For a positive integer~$l$, let~$A_l$ denote the linear subspace of~$A$ generated by homogeneous polynomials of degree~$ld-(n+1)$. 
Let~$B$ denote~$\oplus_l A_l$.
Finally, for~$l>0$ let~$(g_{l,i})_{1\leqslant i \leqslant n_l}$ be a basis of~$A_l$, with~$n_l = \dim_L A_l$. 

\begin{prop}\label{coro:hbasis}
Under Hypothesis~\eqref{eqn:regularity}, the family of rational functions~$\left(g_{l,i}/f^\ell\right)_{0<l,i\leqslant n_l}$ induces a basis of~$\Hfp$.
\end{prop}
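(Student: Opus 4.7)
The plan is to establish spanning and linear independence of $(g_{l,i}/f^l)$ in~$\Hfp$ separately, using the reduction identity~\eqref{eqn:reduce-identity} for spanning and Theorem~\ref{thm:griffiths} for independence.

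For spanning, I would argue by induction on the pole order~$\ell$ that every $F = a/f^\ell$ in~$\Lxfp$ is congruent modulo~$D_f$ to a linear combination of the~$g_{l,i}/f^l$ with~$l \le \ell$; the general case then follows by linearity. The base case $\ell = 1$ is direct: any non-zero homogeneous element of~$\Jac f$ has degree at least $d - 1$, whereas here $\deg a = d - n - 1$, so $\Jac f \cap \Lx_{d-n-1} = \{0\}$ (for $n \ge 1$) and $A_1 = \Lx_{d-n-1}$, making~$a$ literally a linear combination of the~$g_{1,i}$. For the inductive step, since $(g_{\ell,i})_i$ is a basis of~$A_\ell$, write $a = \sum_i c_i\, g_{\ell,i} + \sum_j v_j\, \partial_j f$ with homogeneous~$v_j$; a degree count forces $\deg v_j = (\ell-1)d - n$, so that each $v_j/f^{\ell-1}$ lies in~$\Lxf_{-n}$ and the divergence term of~\eqref{eqn:reduce-identity} belongs to~$D_f$. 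This gives $F \equiv \sum_i c_i\, g_{\ell,i}/f^\ell + F_1 \pmod{D_f}$ with~$F_1$ of pole order~$\ell - 1$, and the induction hypothesis applies to~$F_1$.

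For linear independence, suppose $F = \sum_{l,i} c_{l,i}\, g_{l,i}/f^l$ lies in~$D_f$; then $[F] = 0$ by Theorem~\ref{thm:griffiths}. Let~$\ell_0$ be the largest~$l$ with some $c_{\ell_0,i} \neq 0$. By linearity of~\tReduce\ and tracking the top pole order through the recursion, the coefficient of $1/f^{\ell_0}$ in~$[F]$ is the normal form modulo~$G$ of $\sum_i c_{\ell_0,i}\, g_{\ell_0,i}$. The main obstacle is disentangling this top contribution from the lower-order ones: one clears denominators in $[F] = 0$ to see that the numerator at level~$\ell_0$ is divisible by~$f$, and then invokes Euler's relation $f \in \Jac f$ to conclude that any multiple of~$f$ has zero normal form modulo~$\Jac f$, forcing this numerator to vanish. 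Hence $\sum_i c_{\ell_0,i}\, g_{\ell_0,i} \in \Jac f$, contradicting the fact that the $g_{\ell_0, i}$ form a basis of~$A_{\ell_0}$.
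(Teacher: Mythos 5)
Your proof is correct and follows essentially the same route as the paper's: linear independence via Theorem~\ref{thm:griffiths} together with the observation that the top-level numerator of the reduced form must lie in $(f)\subseteq\Jac f$ and hence vanish, and spanning via the reduction identity~\eqref{eqn:reduce-identity}. The only difference is organizational: you obtain spanning by a direct induction on the pole order that works for an arbitrary basis of each $A_l$, whereas the paper first assumes the $g_{l,i}$ are reduced modulo a Gröbner basis and then removes that assumption by a dimension count.
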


\begin{proof}
Suppose there exists a linear relation between the~$g_{l,i}/f^\ell$ modulo~$D_f$, that is~$\sum_{l,i} u_{l,i}g_{l,i}/f^\ell$, denoted by~$F$, lies in~$D_f$ for some elements~$u_{l,i}$ of~$L$, not all zero.
Let~$l_0$ be the maximum~$l$ such that~$u_{l,i}$ is not zero for at least one~$i$.
By Theorem~\ref{thm:griffiths}, $[F]=0$ so that 
$\sum_{l,i}u_{l,i}g_{l,i}f^{\ell_0-l}$, the numerator of~$F$, lies in~$\Jac f$.
Since~$f$ itself is in~$\Jac f$, so is the sum~$\sum_i u_{l_0, i}g_{l_0, i}$, contradicting the fact that the~$g_{l_0,i}$ are a basis of~$A_{l_0}$. 
Thus the~$g_{l,i}/f^\ell$ form a free family.

To prove that this family generates~$\Hfp$, we first notice that the family of all the fractions~$[F]$, for~$F$ in~$L[\xx,\tfrac{1}{f}]_{-n-1}$, generates~$\Hfp$ since~$[F]$ equals~$F$ modulo~$D_f$.
Now we assume for a moment that each~$g_{l,i}$ is reduced with respect to a Gr\"obner basis~$G$ of~$\Jac f$.
Then each polynomial of~$\Lx$ of degree~$ld-n-1$ which is reduced with respect to~$G$ is a linear combination of the~$g_{l,i}$.
Thus for all~$F=a/f^\ell$ in~$L[\xx,\tfrac{1}{f}]_{-n-1}$, the reduction~$[F]$ is in the span of all the~$g_{l,i}/f^\ell$. %
This makes the~$g_{l,i}/f^\ell$ a system of generators of~$\Hfp$ and by the previous paragraph a basis of it. Thus~$\Hfp$ has the same dimension as~$B$ and
any free family of~$\Hfp$ of cardinal~$\dim_L B$ is a basis of~$\Hfp$.
In particular, the~$g_{l,i}/f^\ell$ form a basis even if the~$g_{l,i}$ are not reduced with respect to~$G$.
\end{proof}

\begin{coro}\label{coro:hdim}
  Under Hypothesis~\eqref{eqn:regularity},~$\Hfp$ has dimension
  \[ \frac 1d \left((d-1)^{n+1}+(-1)^{n+1}(d-1)\right) \quad\mathrlap{\big( \leqslant d^n \big).} \]
\end{coro}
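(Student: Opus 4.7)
The plan is to combine Proposition~\ref{coro:hbasis} with a direct computation of the Hilbert series of $A = \Lx/\Jac f$. Writing $H_A(T) = \sum_k h_k T^k$, the proposition together with the definition of $B$ gives
\[ \dim_L \Hfp \;=\; \dim_L B \;=\; \sum_{l \ge 1} \dim_L A_l \;=\; \sum_{l \ge 1} h_{ld-(n+1)}. \]

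To determine $H_A$, I would first extract a regular-sequence statement from Hypothesis~\eqref{eqn:regularity}. The hypothesis asserts that $V(\partial_0 f, \ldots, \partial_n f) \subset \PP^n$ is empty, so the $n+1$ homogeneous forms $\partial_i f$ vanish only at the origin in affine space. Since $\Lx$ is Cohen--Macaulay of dimension $n+1$, an ideal generated by $n+1$ elements whose quotient has Krull dimension zero is automatically a complete intersection, so the $\partial_i f$ form a regular sequence. Each has degree $d-1$, so the Koszul resolution yields
\[ H_A(T) \;=\; \left(\frac{1-T^{d-1}}{1-T}\right)^{n+1}. \]
Because this polynomial has support in degrees $0, \ldots, (n+1)(d-2)$, the sum above extends to a sum over all $l \in \mathbb{Z}$ along the arithmetic progression of residue $r \equiv -(n+1) \pmod d$.

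Next I would apply the classical roots-of-unity filter: for $\omega$ a primitive $d$-th root of unity,
\[ \sum_{l \in \mathbb{Z}} h_{ld+r} \;=\; \frac{1}{d}\sum_{j=0}^{d-1} \omega^{-jr}\, H_A(\omega^j). \]
The term $j=0$ contributes $(d-1)^{n+1}/d$. For $j \ne 0$, the identity $\omega^{j(d-1)} = \omega^{-j}$ yields $H_A(\omega^j) = (-\omega^{-j})^{n+1} = (-1)^{n+1} \omega^{-j(n+1)}$, and since $r + (n+1) \equiv 0 \pmod d$ the phase $\omega^{-j(r+n+1)}$ equals $1$. Each of the $d-1$ nonzero terms therefore contributes $(-1)^{n+1}/d$, yielding the claimed closed form.

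The bound $\le d^n$ reduces, after multiplying by $d$, to $d^{n+1} - (d-1)^{n+1} \ge (-1)^{n+1}(d-1)$. When $n+1$ is odd this is trivial; when $n+1$ is even it follows from the elementary estimate $d^{n+1} - (d-1)^{n+1} \ge (n+1)(d-1)^n \ge d-1$. The only genuine obstacle in this plan is the passage from Hypothesis~\eqref{eqn:regularity} to the regular-sequence property underlying the Koszul Hilbert series; the rest is a textbook generating-function calculation.
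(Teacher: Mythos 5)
Your proof is correct and takes the same route as the paper: both reduce the claim to $\dim_L \Hfp = \dim_L B$ via Proposition~\ref{coro:hbasis}. The only difference is that the paper delegates the computation of $\dim_L B$ to a citation, whereas you carry it out in full --- the regular-sequence property of the $\partial_i f$ extracted from Hypothesis~\eqref{eqn:regularity} and the Cohen--Macaulayness of $\Lx$, the Koszul Hilbert series $\bigl(\frac{1-T^{d-1}}{1-T}\bigr)^{n+1}$, and the root-of-unity filter along the progression $ld-(n+1)$ --- and every step of that computation, including the elementary verification of the bound $\le d^n$, checks out.
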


\begin{proof}
It has the dimension of~$B$, see~\cite[thm. 8.3]{Mon70} for its computation. The inequality is clear.
\end{proof}

\section{Creative Telescoping}

\noindent We now introduce an algorithm, based on the Griffiths--Dwork reduction, that computes a telescoper of a rational function under Hypothesis~\eqref{eqn:regularity}.

In Equation~\eqref{eqn:prob-ct}, the telescoper~$T$ is said to have a \emph{regular certificate} if the irreducible factors of the denominators of the~$A_i$'s, as rational functions over~$k(t)$, divide the denominator of~$F$; in other words, the~$A_i$'s have no pole outside those of~$F$, over~$k(t)$. 
Algorithm~\ref{algo:ct}, described in~\S\ref{sec:ct-algo}, returns the telescoper of minimal order having regular certificate.
For the application of creative telescoping to integration, this class of telescopers is more interesting than the general one; that is the object of~\S\ref{sec:regular-certificate}.

\subsection{Telescopers with regular certificate}\label{sec:regular-certificate}

\noindent Back to the affine case, let~$F(t,x_1,\dotsc,x_n)$ be a rational function over~$\CC$ and~$\gamma$ be a~$n$-cycle in~$\CC^n$ over which~$F$ has no pole for a generic~$t$ in~$\CC$. %
A common use of creative telescoping is the computation of a differential equation satisfied by the one-parameter integral~$I(t)=\int_{\smash{\gamma}} F\ud \xx$.
As mentioned in the introduction, it is not always possible to deduce from the telescoping equation~\eqref{eqn:prob-ct}
that~$T(I)$ vanishes. It may happen that the polar locus of the certificate meets~$\gamma$ for all~$t \in\CC$, and so some~$\int_{\smash{\gamma}}\partial_i A_i\ud \xx$ need not be zero.
An example of this phenomenon is given by Picard~\cite{Pic99} for a bivariate algebraic function
and translated here into a rational example, using the method in~\cite[Lemma 4]{CheKauSin12}:
\begin{multline}\label{eqn:picard-ex}
  \frac{x-y}{z^2-P_t(x)P_t(y)} = \partial_x\tfrac{2 P_t(x)}{(x-y) \left(z^2-P_t(x)P_t(y)\right)} +\\
   \partial_y \tfrac{2 P_t(y)}{(x-y) \left(z^2-P_t(x)P_t(y)\right)}+\partial_z \tfrac{3 \left(x^2+y^2\right) z}{(x-y) \left(z^2-P_t(x)P_t(y)\right)} , 
\end{multline}
where~$P_t(u)=u^3+t$.
Note the factor~$x-y$ in the denominator of the certificate.
The operator~$1$ is a telescoper of the left-hand side~$F$, however
there exists a $3$-cycle~$\gamma$ on which~$F$ has no pole and such that~$\int_\gamma F\ud \xx$ is not zero. 
It is thus impossible to find a regular certificate for the telescoper~$1$.

Nevertheless, a differential equation for~$I(t)$ can be obtained in two ways.
First, one can carefully study the integral~$\sum_i\int_\gamma \partial_i A_i \ud \xx$ and compute a differential equation for it.
Usually this includes the analysis of the poles of the~$A_i$'s, and the search of a telescoper for some rational function with one variable less.
The second way is to find a telescoper for~$F$ such that the certificate does not contain new poles, a \emph{telescoper with regular certificate}.
Contrary to the telescoper~\eqref{eqn:picard-ex}, the operator~$\partial_t$ is a telescoper with regular certificate:
\[ \partial_t F = \partial_x\left( -\tfrac{x}{3t}F \right)+\partial_y\left(- \tfrac{y}{3t}F \right)+\partial_z\left(- \tfrac{z}{t}F \right). \]
This proves that~$\partial_t I = 0$. More generally we have:
\begin{prop}
  If~$T\in\CC(t)\langle \partial_t \rangle$ is a telescoper of~$F$ with regular certificate, then~$T(I)$ is zero.
\end{prop}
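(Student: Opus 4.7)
The plan is to push the operator $T$ inside the integral defining $I$ and then invoke Stokes' theorem. Write $T=\sum_{j=0}^r c_j(t)\partial_t^j$, and let $A_1,\dotsc,A_n$ be a regular certificate for $T$, so that $T(F)=\sum_{i=1}^n \partial_i A_i$. For generic $t\in\CC$, the cycle $\gamma$ lies in the complement of the polar locus of $F$, so $F$ and its $t$-derivatives are smooth on a neighbourhood of $\gamma$. Standard differentiation under the integral sign will then give
\[ T(I)(t) = \int_\gamma T(F)\,\ud\xx = \int_\gamma \sum_{i=1}^n \partial_i A_i \,\ud\xx. \]

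The next step is to recognise the integrand as an exact $n$-form. I introduce the $(n-1)$-form
\[ \omega = \sum_{i=1}^n (-1)^{i-1} A_i \,\ud x_1 \wedge \cdots \wedge \widehat{\ud x_i} \wedge \cdots \wedge \ud x_n, \]
whose exterior derivative is $\ud\omega = \bigl(\sum_i \partial_i A_i\bigr)\,\ud x_1 \wedge \cdots \wedge \ud x_n$. Because the certificate is regular, the poles of each $A_i$, viewed as a rational function over $\CC(t)$, lie inside the polar locus of $F$, which $\gamma$ avoids for generic $t$. Therefore $\omega$ is smooth in an open neighbourhood of $\gamma$, and since $\gamma$ is a cycle it has no boundary; Stokes' theorem then yields $\int_\gamma \ud\omega = 0$. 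Combined with the display above, this gives $T(I)(t) = 0$ for generic $t\in\CC$, and hence for every $t$ at which $I(t)$ is defined.

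The only delicate step is the application of Stokes' theorem, which requires $\omega$ to extend smoothly across $\gamma$; this is precisely where the regularity hypothesis on the certificate enters. Without it, the $A_i$ could acquire new poles outside the polar locus of $F$ that meet $\gamma$, exactly as in the Picard example~\eqref{eqn:picard-ex}, and the conclusion would fail even though the identity $T(F)=\sum_i \partial_i A_i$ continues to hold as an identity of rational functions. The commutation of $T$ with $\int_\gamma$ is routine, since $\gamma$ is independent of $t$ and the integrand is uniformly smooth on a neighbourhood of $\gamma$ for $t$ in a small disk.
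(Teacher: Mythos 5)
Your proof is correct. The paper actually states this proposition without any proof, and your argument --- differentiating under the integral sign, rewriting $\sum_i \partial_i A_i\,\ud x_1\wedge\cdots\wedge\ud x_n$ as the exterior derivative of the standard $(n-1)$-form built from the $A_i$, and applying Stokes' theorem on the boundaryless cycle $\gamma$, with regularity of the certificate guaranteeing smoothness of that form near $\gamma$ for generic $t$ --- is exactly the standard justification the authors take for granted; your closing remark correctly identifies why the argument breaks down for the non-regular certificate in the Picard example~\eqref{eqn:picard-ex}.
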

In this case, the certificate itself is not needed to prove the conclusion, its existence and regularity are sufficient. The Griffiths--Dwork method always produces a telescoper with regular certificate, see Equation~\eqref{eqn:reduce-identity}.

\subsection{Algorithm}\label{sec:ct-algo}
\begin{algo}
\centering
\begin{algoenv}{
    $F = a/f^\ell$ a rational function in~$\Lxfp$, with~$f$ satisfying~\eqref{eqn:regularity}
  }{$T(t,\partial_t)$ an operator such that~$T(F) = \sum_i\partial_iA_i$ for some rational functions~$A_i$}
  \Procedure{\tCreatTel}{$F$}
    \State $G_0\gets \Reduce(F)$ 
    \State $i\gets 0$
    \Loop
    \If{$\mop{rank}_L(G_0,\dotsc,G_i) < i+1$}
      \State solve~$\sum_{k=0}^{i-1} a_k G_k = G_i$ w.r.t.~$a_0, \dotsc, a_{i-1}$ in~$L$
      \State \Return $\partial_t^i - \sum_k a_k\partial_t^k$
    \Else
      \State $G_{i+1} \gets \Reduce(\partial_t G_{i})$
      \State $i\gets i+1$
    \EndIf
    \EndLoop
  \EndProcedure
\end{algoenv}
\caption{Creative telescoping, regular case}
\label{algo:ct}
\end{algo}

\noindent In this section~$L$ is~$k(t)$ for some field~$k$ and $f$ is a homogeneous polynomial over~$L$ of degree~$d$ satisfying Hypothesis~\eqref{eqn:regularity}. 
For~$F$ a rational function in~$\Lxfp$ we want to find a nonzero operator~$T$ in~$L\langle \partial_t \rangle$ such that~$T(F)$ lies in~$D_f$.
Algorithm~\ref{algo:ct} describes the procedure \tCreatTel\ that outputs such a telescoper.
Note that~$\Lxfp$ is stable with respect to the derivation~$\partial_t$.

\begin{prop}
Algorithm~\ref{algo:ct} terminates and outputs the minimal telescoper of~$F$ that has regular certificate.
\end{prop}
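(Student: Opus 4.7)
\emph{Proof plan.} The cornerstone is the identity $G_i = [\partial_t^i F]$ for every $i \ge 0$, which I would prove by induction. The base case is by construction. Since $\partial_t$ commutes with each $\partial_j$ and does not affect the homogeneous degree in $\xx$, it sends $D_f$ into itself. Combined with $F - [F] \in D_f$ and Theorem~\ref{thm:griffiths}, this yields $[\partial_t F] = [\partial_t [F]]$, from which the induction step follows.

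Termination is then immediate: by Proposition~\ref{coro:hbasis}, all reduced forms lie in a single $L$-vector space whose dimension is bounded by $d^n$ (Corollary~\ref{coro:hdim}), so the sequence $(G_i)$ must become $L$-linearly dependent by index $d^n$ at the latest, at which point the loop exits.

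For correctness, when the loop exits at index $i$ the family $(G_0, \ldots, G_{i-1})$ is $L$-linearly free (otherwise the loop would already have exited earlier), so the equation $G_i = \sum_{k=0}^{i-1} a_k G_k$ has a unique solution. Setting $T = \partial_t^i - \sum_k a_k \partial_t^k$, the $L$-linearity of $[\,\cdot\,]$ and the preliminary identity give $[T(F)] = G_i - \sum_k a_k G_k = 0$, whence $T(F) \in D_f$ by Theorem~\ref{thm:griffiths}. Unrolling Equation~\eqref{eqn:reduce-identity} throughout the recursive calls of \tReduce, every $A_j$ that is produced is a linear combination of fractions whose only denominator is a power of $f$, so the certificate is regular. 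For minimality, any telescoper $T' = \sum_{k=0}^{i'} b_k \partial_t^k$ of order $i' < i$ with regular certificate writes $T'(F) = \sum_j \partial_j A_j$ with $A_j \in \Lxf$; since $T'(F) \in \Lxfp$, the degree-$(-n)$ homogeneous components of the $A_j$ yield a certificate in $D_f$, so $[T'(F)] = \sum_k b_k G_k = 0$ with $b_{i'} \ne 0$, contradicting the freeness of $(G_0, \ldots, G_{i-1})$.

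The most delicate step I expect is the commutation $[\partial_t\, \cdot\,] = \partial_t\,[\,\cdot\,]$, which is not visible from the recursive definition of \tReduce\ but emerges cleanly from the characterization of $\ker[\,\cdot\,]$ as $D_f$ provided by Theorem~\ref{thm:griffiths}. Once this is in hand, the rest of the argument is linear algebra, a degree count to extract the correct graded piece of a general regular certificate, and an unrolling of the reduction identity to verify regularity of the certificate built by the algorithm.
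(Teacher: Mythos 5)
Your proposal is correct and follows essentially the same route as the paper: induction to establish $G_i = [\partial_t^i F]$ via the stability of $D_f$ under $\partial_t$ and Theorem~\ref{thm:griffiths}, confinement (Proposition~\ref{coro:hbasis}) for termination, and Equation~\eqref{eqn:reduce-identity} for regularity of the certificate. Your minimality argument --- extracting the degree-$(-n)$ homogeneous components of an arbitrary regular certificate to land in $D_f$ --- is a correct and welcome elaboration of a step the paper compresses into a single sentence.
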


\begin{proof}
The sequence~$(G_k)$ is defined by~$G_0 = [F]$ and the recurrence relation~$G_{k+1} = [\partial_t G_k]$.
We show by induction that for all~$k$ the fraction~$G_k$ equals~$[ \partial_t^k F]$.
This is clear for~$k=0$. Assume that~$G_k$ equals~$[ \partial_t^k F]$.
By the soundness of the reduction the operator~$G_k  - \partial^k_t F$ lies in~$D_f$. 
And then so does~$\partial_t G_k - \partial_t^{k+1} F$ since~$\partial_t$ commutes with the~$\partial_i$'s.
By Theorem~\ref{thm:griffiths} and linearity, this implies that~$[\partial_t G_k]$ equals~$[ \partial_t^{k+1} F]$.

At the~$i$th step of the loop the algorithm is looking for a linear relation between~$[F],\dotsc,[\partial^i_t F]$.
By Theorem~\ref{thm:griffiths}, there is one if and only if there is a telescoper with regular certificate of order~$i$.
If there is such a relation, the algorithm computes it and returns the corresponding telescoper.
By %
Proposition~\ref{coro:hbasis}, the algorithm terminates.
The telescoper admits a regular certificate by design, see Equation~\eqref{eqn:reduce-identity}.
\end{proof}

\section{Effective bounds for creative telescoping}\label{sec:eff-bounds}

\noindent We now review the steps of the algorithm with the aim of bounding the degrees and orders of all polynomials and operators that are constructed.
This is then used in the next section to assess the complexity of this approach.

For the needs of Section~\ref{sec:singular}, we track the degrees not only with respect to the parameter~$t$ but also to another free variable~$\varepsilon$ of the base field. 
In other words, we assume that~$L$ is~$k(t,\varepsilon)$.
For~$p$ a polynomial in~$k[t,\varepsilon]$, the bi\-degree~$(\deg_t p, \deg_\varepsilon p)$ of~$p$ is denoted by~$\delta(p)$.
If~$p=\sum_I p_I \mathbf x^I$ is a polynomial in~$t$,~$\varepsilon$ and~$\mathbf x$, then~$\delta(p)$ denotes the supremum of the~$\delta(p_I)$'s, component by component.

\begin{thm}\label{coro:asy-deg}
Let $f\in L[\xx]$ be homogeneous of degree~$d$ satisfying~\eqref{eqn:regularity}.
  Let~$a/f^\ell$ in~$\Lxfp$ be a rational function, with~$a$ a polynomial in~$t$ and~$\varepsilon$.
  The minimal telescoper  of~$a/f^\ell$  with regular certificate has  degree
  \[ \bigO \left( d^n \delta(a) + \left( ld^{2n}+ d^{3n} \right)\eexp^{n} \delta(f) \right), \]
  uniformly in all the parameters. It has order at most~$d^n$.
\end{thm}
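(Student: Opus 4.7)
The order bound is immediate from what has already been built up. Corollary \ref{coro:hdim} gives $\dim_L \Hfp \le d^n$, so in the loop of Algorithm~\ref{algo:ct} the sequence $G_0 = [F], G_1 = [\partial_t F], \dotsc$ must become $L$-linearly dependent by index $d^n$ at the latest. Hence the algorithm returns a minimal telescoper of order~$r \le d^n$, and the remaining task is the degree bound. The plan is to track $\delta$ through the three successive operations of the algorithm: (1)~a single call to \tReduce, (2)~iterating $\partial_t$ and reducing, (3)~the final linear solve.

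For (1), fix once and for all a Gröbner basis $G$ of $\Jac f$. The key quantitative input is a uniform statement of the form: reducing a polynomial~$p \in L[\xx]$ of bi-degree $\delta(p)$ modulo $G$ produces a quotient and remainder of bi-degree at most $\delta(p) + C(n,d)\,\delta(f)$, for some explicit $C(n,d) = O(d^{n} \eexp^{n})$. The exponential factor~$\eexp^{n}$ is where the main work hides: it arises from iterating, up through the $n+1$-variable division process, per-step growth estimates of the form $(1+1/k)^n \le \eexp$ on how $L$-coefficients in $G$ multiply as we reduce. Granted this, one call to \tReduce\ on $a/f^\ell$ performs at most $\ell$ recursive reductions, each adding at most $C(n,d)\,\delta(f)$ to the numerator bi-degree and leaving $\delta(a)$ intact in the lead term, so
\[
\delta([a/f^\ell]) = O\bigl(\delta(a) + \ell\, d^{n}\eexp^{n}\,\delta(f)\bigr).
\]

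For (2), differentiating $k$ times sends $a/f^\ell$ to a fraction of pole order $\ell+k$ and numerator bi-degree $\delta(a) + k\,\delta(f)$, essentially by Leibniz. Applying the Stage~1 bound to this, with $k \le r \le d^n$, yields
\[
\delta([\partial_t^k F]) = O\bigl(\delta(a) + (\ell d^{n} + d^{2n})\eexp^{n}\,\delta(f)\bigr), \quad 0 \le k \le d^n.
\]
For (3), the minimal telescoper's coefficients $a_0,\dotsc,a_{r-1}$ satisfy an $r\times r$ system over $L$ whose columns are the coordinates of the $[\partial_t^k F]$ in a fixed basis of $\Hfp$ (e.g.~the one from Proposition~\ref{coro:hbasis}); Cramer's rule multiplies the entry bi-degree by a factor of at most $r \le d^{n}$. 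Composing the three bounds gives the claimed $O(d^{n}\delta(a) + (\ell d^{2n} + d^{3n})\eexp^{n}\delta(f))$.

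The main obstacle is plainly Stage~1: producing a per-reduction bi-degree bound $C(n,d)=O(d^{n}\eexp^{n})$ that is uniform in the input polynomial and truly linear in $\delta(f)$. This requires a fine understanding of how the $k(t,\varepsilon)$-coefficients of a Gröbner basis of $\Jac f$ are controlled by $\delta(f)$ and, in the recursive reduction tree, how these coefficients compound. A secondary, lesser difficulty is to verify that (2)--(3) genuinely preserve the \emph{linear} dependence on $\delta(a)$ in the final bound, so that the $\delta(a)$ term in the telescoper degree is indeed $d^{n}\delta(a)$ and not inflated to $d^{2n}$ by additional Cramer or reduction passes on mixed terms.
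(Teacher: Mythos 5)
Your overall architecture mirrors the paper's: the order bound from the confinement property (Corollary~\ref{coro:hdim}), a per-reduction degree-growth bound of the form $O(\eexp^n d^n)\,\delta(f)$, propagation of that bound through the $r\le d^n$ derivatives, and a final Cramer's-rule argument on the $r\times r$ system. Stages (2) and (3) are sound and match Proposition~\ref{prop:bound-denom}, the bounds~\eqref{eqn:deltabk}--\eqref{eqn:deltabk'}, and Proposition~\ref{thm:delta-ck} respectively.

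The genuine gap is exactly where you locate it yourself: Stage~1. You assert, without proof, that normal-form reduction modulo a Gr\"obner basis of $\Jac f$ over $L=k(t,\varepsilon)$ increases bi-degrees by at most $C(n,d)\,\delta(f)$ with $C(n,d)=O(d^n\eexp^n)$, and the heuristic you offer for the $\eexp^n$ factor (``per-step growth estimates of the form $(1+1/k)^n\le\eexp$'' during multivariate division) is not an argument: there is no known uniform control of the $k(t,\varepsilon)$-coefficients of a Gr\"obner basis of $\Jac f$, or of the coefficients produced by its division algorithm, that is linear in $\delta(f)$ with an explicit constant. This is precisely the difficulty the paper engineers around by \emph{abandoning} Gr\"obner reduction in favor of the Macaulay-matrix construction of \S\ref{sec:red-mod-jac}: the decomposition $a=\pi_q(a)+\sum_i\psi_q(a)_i\partial_i f$ is realized by explicit linear maps whose matrix entries are controlled by Cramer's rule and Hadamard's bound applied to a maximal-rank minor of $\Mat\varphi_q$; the $\eexp^n d^n$ in Proposition~\ref{prop:deltaE} then comes from bounding the minor's size $r\le R_D=\binom{D-1}{n}$ via $\binom{p}{n}\le(p\eexp/(n+1))^n$ at Macaulay's bound $D=(n+1)d-n$, not from any per-variable compounding in a division process. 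Surjectivity of $\varphi_q$ for $q\ge D$ (Macaulay's bound) is also what guarantees the reduction terminates at pole order $n$, which your sketch uses implicitly in Stage~2. Without replacing your Gr\"obner-based Stage~1 by this linear-algebra reduction (Algorithm~\ref{algo:reduce/linalg}) or an equivalent quantitative substitute, the proof does not go through.
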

The last part of the theorem is a direct consequence of the confinement property of Corollary~\ref{coro:hdim}.
We now study more precisely the decomposition used in Algorithm~\ref{algo:reduce} in order to control the degree of the telescoper and complete the proof.

The notation~$a(\mathbf n) = \bigO(b(\mathbf n))$, for a tuple~$\mathbf n$, means that there exists~$C>0$ such that for all~$\mathbf n \geqslant 1$, with at most a finite number of exceptions, we have~$a(\mathbf n) \leqslant Cb(\mathbf n)$.
The notation~$a(\mathbf n) = \softO(b(\mathbf n))$ means that~$a(\mathbf n) = \bigO(b(\mathbf n) \log^k b(\mathbf n))$ for some integer~$k$.
We emphasize that when there are several parameters in a~$\bigO$, the constant is uniform in all the parameters and there is at most a finite number of exceptions.

\subsection{Reduction modulo the Jacobian ideal}
\label{sec:red-mod-jac}

\noindent An important ingredient of the Griffiths--Dwork reduction is the computation of a
decomposition~$r+\sum_i u_i\partial_if$ of a homogeneous polynomial~$a$.  This
can be done by means of a Gröbner basis of~$\Jac f$, but instead of following
the steps of a Gr\"obner basis algorithm, we cast the computation into a linear
algebra framework using Macaulay's matrices, for which Cramer's rule and
Hadamard's bound can then be used.
While not strictly equivalent, both methods ensure that~$r$ depends linearly on~$a$ and vanishes when~$a$ is in~$\Jac f$.

For a positive integer~$q$, let~$\varphi_q$ denote the linear map
\[ \varphi_q :  (u_i) \in \Lx_{q-d-n}^{n+1} \longrightarrow \sum_{i=0}^{\smash{n}} u_i\partial_i f \in \Lx_{q-n-1}. \]
Let~$\Mat \varphi_q$ be the matrix of~$\varphi_q$ in a monomial basis.
It has dimension~$R_q \times C_q$, where~$R_{q}$ denotes~$\tbinom{q-1}{n}$ and~$C_{q}$ denotes~$(n+1)\tbinom{q-d}{n}$,
and we note for future use that~$C_{q}\leqslant R_{q}$ for all positive integers~$n$ and~$d>2$.
Up to a change of ordering of the bases of the domain and codomain, %
$\Mat \varphi_q$ has the form $\left(\begin{smallmatrix} A & B \\ C & D \end{smallmatrix}\right)$,
where~$A$ is a  square submatrix of maximal rank.
Note that~$D$ is necessarily~$CA^{-1}B$.
Then, the endomorphism~$\psi_q$ defined by the matrix~$\left(\begin{smallmatrix} A^{-1} & 0 \\ 0 & 0 \end{smallmatrix}\right)$
satisfies~$\varphi_q \psi_q \varphi_q = \varphi_q$; it is called a \emph{split} of $\varphi_q$.
It depends on the choice of the maximal rank minor.
The map~$\id - \varphi_q \psi_q$, denoted by~$\pi_q$, performs the reduction in degree~$q-n-1$: it is idempotent;
if~$a$ of degree $q-n-1$ is in~$\Jac f$ then it equals~$\varphi_q(b)$ for some~$b$ and thus~$\pi_q(a)$ vanishes;
and for all~$a$ in~$\Lx_{q-n-1}$ it gives a decomposition
\[ a = \pi_q(a) + \sum_i \psi_q(a)_i \partial_i f. \]

Under Hypothesis~\eqref{eqn:regularity}, the map~$\varphi_q$ is surjective when~$q$ is at least~$(n+1)d - n$.
Let~$D$ denote this bound, known as \emph{Macaulay's bound}~[\citen{Mac16}, chap.~1; \citen{Laz77}, corollaire, p.~169]. 

For~$q$ larger than~$D$, a split of~$\varphi_q$ can be obtained from a split~$\psi_{D}$ of~$\varphi_{D}$ in the following way.
Let~$S$ be the set of monomials in~$\xx$ of total degree~$q-D$.
Choose a linear map $\mu$ from~$\Lx_{q-n-1}$ to~$\Lx_{D-n-1}^S$ such that each~$a$ in~$\Lx_{q-n-1}$ equals~$\sum_{m\in S} m \mu_m(a)$.
Then a split of~$\varphi_d$ is defined by
\[ \psi_d(a) = \sum_{m\in S} m \psi_{D}(\mu_m(a)).\]

Let~$q$ be a positive integer and let~$E_q$ be the least common multiple of the denominators of the entries of~$\Mat \psi_q$.
The entries of~$\Mat \psi_q$ and~$\Mat \pi_q$ are rational functions of the form~${p}/{E_q}$, with~$p$ polynomial.
Let~$\delta_E$ denote the supremum of all~$\delta(p)$ and all~$\delta(E_q)$, for $q\in\mathbb{N}\setminus\{0\}$.

\begin{prop}\label{prop:deltaE}
  The supremum~$\delta_E$ is finite and bounded above by~$\eexp^n d^n \delta(f)$. 
  Moreover, if~$q>D$ then~$E_q$ equals~$E_{D}$.
\end{prop}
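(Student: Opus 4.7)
The plan is to handle the two parts in reverse order, since the stability $E_q = E_D$ for $q > D$ reduces the supremum over $q$ to a single bound.

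\emph{Stability.} I will exploit the formula $\psi_q(a) = \sum_{m \in S} m\, \psi_D(\mu_m(a))$ after fixing $\mu$ canonically: to each monomial $x^J$ of degree $q - n - 1$ I attach a distinguished factorization $x^J = m \cdot x^{J''}$ with $m \in S$, chosen greedily so as to absorb as much $x_0$ as possible, and set $\mu_m(x^J) = x^{J''}$ and $\mu_{m'}(x^J) = 0$ for $m' \ne m$. With this choice, every nonzero entry of $\Mat \psi_q$ in the monomial basis coincides with an entry of $\Mat \psi_D$, so $E_q \mid E_D$. Conversely, for any monomial $x^{J''}$ of degree $D - n - 1$, the greedy factorization of $x_0^{q-D} \cdot x^{J''}$ is precisely $(x_0^{q-D}, x^{J''})$, whence $\psi_q(x_0^{q-D} x^{J''}) = x_0^{q-D}\, \psi_D(x^{J''})$; this exhibits every entry of $\Mat \psi_D$ as an entry of $\Mat \psi_q$, giving $E_D \mid E_q$.

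\emph{Degree bound.} By Cramer's rule applied to the maximal-rank square block $A$ of $\Mat \varphi_D$, each entry of $A^{-1}$ has the form $\pm(\text{minor of } A)/\det A$, so $E_D$ divides $\det A$. The entries of $A$ are coefficients of the $\partial_i f$'s and have bidegree at most $\delta(f)$; $A$ has size at most $R_D = \binom{(n+1)(d-1)}{n}$. The Leibniz expansion of $\det A$ as a signed sum of products of $R_D$ entries then gives $\delta(\det A) \le R_D\, \delta(f)$ componentwise. An elementary binomial estimate (using $\binom{N}{n} \le N^n/n!$, $n! \ge (n/\eexp)^n$, and $(1+1/n)^n \le \eexp$) yields $R_D \le \eexp^n d^n$. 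For $q \le D$ the same Cramer argument applied to a submatrix of size at most $R_q \le R_D$ gives the same bound, and the stability claim takes care of $q > D$.

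\emph{Main obstacle.} The delicate step is the reverse divisibility $E_D \mid E_q$ in the stability argument: it requires $\mu$ to be chosen so that $\psi_q(x_0^{q-D} b) = x_0^{q-D}\, \psi_D(b)$ for all $b \in \Lx_{D-n-1}$, which holds for the greedy-in-$x_0$ factorization but need not hold for arbitrary $\mu$. Once $\mu$ is fixed accordingly so that $\Mat \psi_q$ literally embeds $\Mat \psi_D$ (up to reordering) as a submatrix, the remaining Cramer/Hadamard-type degree estimate is routine.
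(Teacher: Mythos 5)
Your proof follows essentially the same route as the paper: for $q>D$ exploit the construction $\psi_q(a)=\sum_{m\in S}m\,\psi_D(\mu_m(a))$ so that the entries of $\Mat\psi_q$ are entries of $\Mat\psi_D$, and for $q\le D$ apply Cramer's rule to the maximal-rank block and bound the determinant degree by $(\text{size})\cdot\delta(f)$ with the size controlled by $R_D=\tbinom{(n+1)(d-1)}{n}$. You are in fact more careful than the paper on the stability claim, since you prove both divisibilities $E_q\mid E_D$ and $E_D\mid E_q$ by pinning down $\mu$ as a monomial splitting; the paper only asserts that the entries of $\Mat\psi_q$ are among those of $\Mat\psi_D$, which by itself gives one direction.

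Two small points. First, your binomial chain does not quite land on the stated constant: $\tbinom{N}{n}\le N^n/n!$ with $n!\ge(n/\eexp)^n$ and $(1+1/n)^n\le\eexp$ gives $R_D\le\eexp^{n+1}(d-1)^n$, which exceeds $\eexp^n d^n$ whenever $(1+\tfrac{1}{d-1})^n<\eexp$ (e.g.\ $n$ small, $d$ large). You need the sharper estimate $n!\ge\left(\tfrac{n+1}{\eexp}\right)^n$, equivalently the inequality $\tbinom{p}{n}\le\left(\tfrac{p\,\eexp}{n+1}\right)^n$ used in the paper, which yields $R_D\le((d-1)\eexp)^n\le\eexp^nd^n$ directly. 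Second, $\delta_E$ is by definition the supremum over the degrees of $E_q$ \emph{and} of the numerators of the entries of both $\Mat\psi_q$ and $\Mat\pi_q$ written over the common denominator $E_q$; you only bound $E_q$. The numerators of $\Mat\psi_q$ are handled by the same Cramer minors (degree at most $(r-1)\delta(f)$), and those of $\Mat\pi_q=\id-\Mat\varphi_q\Mat\psi_q$ by one extra factor of $\delta(f)$, so this is routine, but it should be said.
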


\begin{proof}
  Assume first that~$q > D$.
  In this case, the entries of~$\Mat \psi_q$ are entries of~$\Mat\psi_{D}$ and~$\pi_q$ is zero.
  Thus the inequalities will follow from the case where~$q\leqslant D$. 
  Let~$\Mat \psi_q$ and~$\Mat \pi_q$ be written respectively as~$N/E_q$ and~$P/E_q$ with~$N$ and~$P$ polynomial matrices.
  Let~$r$ be the rank of~$\psi_q$.
  The maximal rank minor~$A$ in the construction of~$\psi_q$ has dimension~$r$.
  Cramer's rule and Hadamard's bound ensure that~$\delta(N)$ is at most~$(r-1)\delta(f)$ and that~$\delta(E_q)$ is at most~$r\delta(f)$.
  Since~$P$ equals~$E_q\id - (\Mat\varphi_q) N$ and~$\delta(\Mat\varphi_d)$ equals~$\delta(f)$, the degree~$\delta(P)$ is also at most~$r\delta(f)$.

Next, $r$ is bounded by~$R_{q}$, the row dimension of~$\Mat\phi_d$.
Since~$q\leqslant D$, we have~$R_{q}\leqslant R_{D}$ and we conclude using  inequality 
$\tbinom{p}{n}\leqslant\left(\tfrac{p\,\eexp}{n+1}\right)^n$, with~$p\geqslant n$ an integer.
\end{proof}

Algorithm~\ref{algo:reduce/linalg} is a slightly modified version of Algorithm~\ref{algo:reduce} which uses 
the construction above. Its output is in general not equal to the output of the former version, for any monomial order, but of course it satisfies Theorem~\ref{thm:griffiths}. In particular the output of the algorithm~\tCreatTel\ does not depend on the reduction method in~\tReduce.
From now on the brackets~$[\cdot]$ denote the output of Algorithm~\ref{algo:reduce/linalg}.

\subsection{Degree bounds for the reduction}

\begin{prop}\label{prop:bound-denom}
Let~$a/f^l\in\Lxfp$, with~$a$ a polynomial in~$t$ and~$\varepsilon$. Then
\[ \left[\frac{a}{f^\ell}\right] = \frac{1}{P_\ell}\sum_{k=1}^{n} \frac{b_k}{f^k}, \]
where~$P_\ell=\prod_{i=1}^\ell E_{i d}$ and $b_k$ in~$\Lx_{kd-n-1}$ is a polynomial in~$t$ and~$\varepsilon$ such that~$\delta(b_k)\leqslant \delta(a)+\ell\delta_E$, for~$1\le k\le n$.
\end{prop}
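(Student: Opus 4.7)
The plan is to proceed by induction on~$\ell$. The base case~$\ell = 1$ is immediate: Algorithm~\ref{algo:reduce/linalg} returns~$a/f$ without recursion, so I take~$b_1 := E_d \cdot a$ (in~$\Lx_{d-n-1}$, of bidegree at most $\delta_E + \delta(a)$) and~$b_k := 0$ for~$k \ge 2$, and then $b_1/(P_1 f) = a/f$ as required.

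For~$\ell \ge 2$, I peel off one reduction step: set~$r := \pi_{\ell d}(a)$ and~$v_i := \psi_{\ell d}(a)_i$, then scale to get polynomials~$\tilde r := E_{\ell d}\, r$ and~$\tilde v_i := E_{\ell d}\, v_i$ of bidegree~$\le \delta(a) + \delta_E$, by the defining property of~$\delta_E$. Equation~\eqref{eqn:reduce-identity} then reads
\[
\frac{a}{f^\ell} = \frac{r}{f^\ell} + \frac{a_1}{E_{\ell d}\,f^{\ell-1}} + \sum_i \partial_i(\cdots),
\]
where~$a_1 := \tfrac{1}{\ell-1}\sum_i \partial_i \tilde v_i$ is a polynomial of bidegree~$\le \delta(a) + \delta_E$. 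Using $L$-linearity of the reduction to pull out the scalar~$1/E_{\ell d}$ and applying the inductive hypothesis to~$a_1/f^{\ell-1}$ yields a decomposition $P_\ell^{-1}\sum_{k=1}^n \tilde b_k/f^k$ (with $P_\ell = E_{\ell d}P_{\ell-1}$) satisfying $\delta(\tilde b_k) \le \delta(a_1) + (\ell-1)\delta_E \le \delta(a) + \ell\delta_E$.

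The last step is to fold in the residue~$r/f^\ell$ without exceeding the~$k \le n$ cap on the sum. When~$\ell \le n$, I rewrite~$r/f^\ell = \tilde r\, P_{\ell-1}/(P_\ell f^\ell)$ and add~$\tilde r\, P_{\ell-1}$ to~$\tilde b_\ell$; its bidegree is at most~$(\delta(a) + \delta_E) + (\ell-1)\delta_E$, still within the claimed bound. The main obstacle---and what makes the truncation to~$k \le n$ legitimate---is the case~$\ell \ge n+1$: here Macaulay's bound $\ell d \ge D = (n+1)d - n$ forces~$\varphi_{\ell d}$ to be surjective, so~$\pi_{\ell d} = 0$ and the residue~$r$ simply vanishes. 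Homogeneity of each~$b_k$ in~$\xx$ with degree exactly $kd - n - 1$ is automatic throughout, since~$\pi$,~$\psi$, and the~$\partial_i$ respect the natural~$\xx$-grading and $P_{\ell-1}$ is a scalar in~$\xx$.
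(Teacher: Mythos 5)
Your proof is correct and follows essentially the same route as the paper's: unroll one step of Algorithm~\ref{algo:reduce/linalg}, clear the denominator~$E_{\ell d}$ to control bidegrees, induct on~$\ell$, and invoke Macaulay's bound to kill the residues~$\pi_{\ell d}(a)$ for~$\ell\geqslant n+1$, which justifies truncating the sum at~$k=n$. The only cosmetic difference is that the paper first writes the full sum up to~$\ell$ and truncates afterwards, whereas you fold the residue into~$\tilde b_\ell$ at each inductive step; the bookkeeping is equivalent.
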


\begin{proof}
Using Algorithm~\ref{algo:reduce/linalg}, we obtain
  \[ \left[\frac{a}{f^\ell}\right] = \frac{p}{E_{\ell d} f^\ell}+\frac{1}{E_{\ell d}}\left[\frac{g}{f^{\ell-1}}\right], \]
where~$g$ and~$p$ are polynomials in~$\xx$, $t$ and~$\varepsilon$, with~$\delta(p)$ and~$\delta(g)$ at most~$\delta(a)+\delta_E$.
Induction over~$l$ yields
  \[ \left[ \frac{a}{f^l} \right]
    = \sum_{k=1}^l \frac{p_{k}}{f^{k} \prod_{j=k}^l E_{jd}}
  \]
with~$p_k$ polynomials such that~$\delta(p_k)\leqslant \delta(a)+(l-k+1)\delta_E$.
  For~$k>n$, and hence~$kd>D$, the map~$\pi_{k d}$ is~0 and thus so is~$p_{k}$.
  Thus
  \[ \left[ \frac{a}{f^l} \right] = \frac{1}{\prod_{j=1}^{l}E_{jd}} \sum_{k=1}^{\min(l,n)} \frac{p_k\prod_{j=1}^{k-1}E_{jd}}{f^k}. \qed \]
\end{proof}

\begin{algo}
\centering
\begin{algoenv}{
$F = a/f^\ell$ a rational function in~$\Lxfp$, with~$f$ of degree~$d$ %
  }{$[F]$ such that there exist rational functions~$A_0,\dotsc,A_n$ such that $F = [F] +\sum_i\partial_i A_i$}
  \State For all~$1\leqslant i \leqslant \ell$, precompute a split~$\psi_{i d}$ of~$\varphi_{i d}$ (\S\ref{sec:red-mod-jac})
  \Procedure{Reduce}{$a/f^\ell$}
  \If{$\ell=1$}
\Return $a/f^\ell$
  \EndIf
  \State $\displaystyle F_1 \gets \frac{1}{\ell-1}\sum_i\frac{\partial_i \psi_{\ell d}(a)_i}{f^{\ell-1}}$
 
  \State \Return $\frac{\pi_{\ell d}(a)}{f^\ell}+ \Reduce(F_1)$
   \EndProcedure

\end{algoenv}
\caption{Griffiths--Dwork reduction, linear algebra variant}
\label{algo:reduce/linalg}
\end{algo}

This proposition applied to~$\partial_t^i(a/f^l)$ asserts that
\begin{equation}\label{bound:deg-diff}
\left[ \partial_t^i\frac{a}{f^\ell} \right] = \frac{1}{P_{l+i}}\sum_{k=1}^n\frac{b_{i,k}}{f^k} = \frac{1}{P_{l+i}}\frac{b'_{i}}{f^n}
\end{equation}
for some polynomials~$b_{i,k}$ and~$b'_i$ such that
\begin{align}
  \label{eqn:deltabk} \delta(b_{i,k}) &\leqslant \delta(a) + i \delta(f) + (i+l)\delta_E, \\
  \label{eqn:deltabk'} \text{and}\quad\delta(b'_i) &\leqslant \delta(a) + (i+n) \delta(f) + (i+l)\delta_E.
\end{align}

\subsection{Degree bounds for the telescoper}
\begin{prop}\label{thm:delta-ck}
  Let~$T=\sum_{i=0}^{r}c_i \partial_t^i$, with coefficients~$c_i$ in~$k[t,\varepsilon]$, be the minimal telescoper with regular certificate of~$a/f^l$.
  Then
  \begin{align*}
  \delta(c_i) &\leqslant r\delta(a) + \left( r^2 + rl  \right)  \eexp^{n} d^n\delta(f).
  \end{align*}
\end{prop}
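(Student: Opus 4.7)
The plan is to exploit the fact that the reduced forms $G_i := [\partial_t^i(a/f^\ell)]$ live in the finite-dimensional $L$-vector space $\Hfp$ of dimension $N \le d^n$ (Corollary~\ref{coro:hdim}), and to extract the minimal telescoper's coefficients by Cramer's rule applied to the minimal linear relation among $G_0,\dotsc,G_r$, viewed as vectors of coordinates in a fixed basis of $\Hfp$.

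First, I combine Equation~\eqref{bound:deg-diff} with Proposition~\ref{coro:hbasis}: writing $G_i = \sum_{k=1}^n b_{i,k}/(P_{\ell+i} f^k)$ and applying the projection $\pi_{kd}$ to each $b_{i,k}$ provides its expansion in a basis of $A_k$. Each coordinate $V_{i,j}$ of $G_i$ is thus a rational function in $L$ whose numerator has bidegree at most $\delta(b_{i,k}) + \delta_E \le \delta(a) + i\delta(f) + (i+\ell+1)\delta_E$ (using~\eqref{eqn:deltabk} and Proposition~\ref{prop:deltaE}), and whose denominator divides some $P_{\ell+i}E_{kd}$ of bidegree at most $(i+\ell+1)\delta_E$.

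Second, minimality of~$r$ implies that $G_0,\dotsc,G_{r-1}$ are linearly independent over $L$ while $G_r$ lies in their span. Picking $r$ rows on which the first $r$ columns remain independent yields an $r\times r$ invertible submatrix $W$ over $L$, and Cramer's rule (with normalization $c_r=1$) produces $c_i = -\det(W_i)/\det(W)$. Clearing each column $j$ by a polynomial $D_j$ of bidegree at most $(r+\ell+n)\delta_E$ turns $W$ into a polynomial matrix $M$ whose entries have bidegree at most $\delta(a) + r\delta(f) + \bigO\bigl((r+\ell)\delta_E\bigr)$. By multilinearity of the determinant, $c_i = D_i \det(M_i)/\det(M)$, and Hadamard's bound gives $\delta(\det M)$ and $\delta(\det M_i)$ at most $r$ times the maximal entry bidegree.

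Taking the maximum of numerator and denominator bidegrees in this ratio and replacing $\delta_E$ by $\eexp^n d^n \delta(f)$ via Proposition~\ref{prop:deltaE} yields a bound of the form $r\delta(a) + r^2\delta(f) + \bigO\bigl((r^2+r\ell)\,\eexp^n d^n\delta(f)\bigr)$; absorbing the term $r^2\delta(f)$ into the last one (using $\eexp^n d^n\ge 1$) gives the stated inequality. The main technical obstacle is the column-wise clearing of denominators: because the denominators $P_{\ell+i}$ and $E_{kd}$ depend on both~$i$ and~$k$, one must verify that the correction $D_i$ contributed by the column clearing does not inflate the final bound by an extra factor of $n$, and that the product $r\cdot(\text{max entry bidegree})$ remains of size $\bigO\bigl((r^2+r\ell)\delta_E\bigr)$ rather than something like $\bigO(r^2 n\,\delta_E)$.
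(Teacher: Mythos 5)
Your proposal follows essentially the same route as the paper's proof: expand the reduced forms $[\partial_t^i(a/f^\ell)]$ coordinate-wise, extract an invertible $r\times r$ subsystem from the minimal linear relation, apply Cramer's rule, clear denominators by multilinearity of the determinant, bound the resulting determinants entry-wise, and convert $\delta_E$ into $\eexp^n d^n\delta(f)$ via Proposition~\ref{prop:deltaE}. Two remarks. First, the ``main technical obstacle'' you flag dissolves on closer inspection: Proposition~\ref{prop:bound-denom} already writes $[\partial_t^i(a/f^\ell)]$ over the \emph{single} denominator $P_{\ell+i}$, which depends only on the column index $i$, so if you take plain monomial coefficients of the numerators $b_{i,k}$ (as the paper does) rather than coordinates in a basis of $A_k$ --- it is your extra projection by $\pi_{kd}$ that introduces the row-dependent factor $E_{kd}$ --- the clearing is purely column-wise. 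Even in your setup the row factors $E_{k_j d}$ are common to $\det(W_i)$ and $\det(W)$ and cancel in the ratio, and each row contributes only one such factor, so no factor of $n$ can appear. Second, the step you gloss over with ``taking the maximum of numerator and denominator bidegrees in this ratio'' is the one place where an argument is genuinely needed: Cramer's rule only yields the ratios $c_i/c_r$ as quotients of determinants, and to bound the bidegree of the \emph{polynomial} $c_i$ itself one must observe that the coprime vector $(c_0,\dotsc,c_r)$ is proportional, by a polynomial factor, to the vector of cleared numerators, whence $c_i$ divides $\tfrac{P_{\ell+i}}{P_\ell}\det B_i$; without this divisibility the bidegree of an unreduced fraction controls nothing. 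With these two points supplied, your computation closes exactly as in the paper (up to the same harmless absorption of $r^2\delta(f)$ and of the small constants that the paper also performs).
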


\begin{proof}
The operator~$T$ is the output of~$\CreatTel(a/f^l)$.
The rational functions~$c_i/c_r$ form the unique solution to the following system of inhomogeneous linear equations over~$L$, with the~$Y_i$'s as unknown variables:
\[ \sum_{i=0}^{r-1} \left[ \partial_t^i\frac{a}{f^\ell} \right] Y_i = - \left[ \partial_t^r\frac{a}{f^\ell} \right]. \]
We write each~$b_{i,k}$ in~\eqref{bound:deg-diff} as~$\sum_{m\in S}b_{i,k,m} m$, where~$S$ is the set of all monomials in the variables~$\xx$ of degree at most~$nd-n-1$.
The previous system rewrites as
\[  \forall m\in S, \forall k \in \left\{ 1,\dots,n \right\},\quad \sum_{i=0}^{\smash{r-1}} Y_i \frac{b_{i,k,m}}{P_{l+i}} = -\frac{b_{r,k,m}}{P_{l+r}} \]
There is a set~$I$ of~$r$ indices~$\{ (k_0,m_0),\dotsc \}$ such that the square system formed by the corresponding equations admits a unique solution.
We apply Cramer's rule to this system.
Let~$B$ be the square matrix~$(b_{i,k_j,m_j})_{i,j}$, %
for $0\le i,j<r$.
Let~$B_i$ be the matrix obtained by replacing the row number~$i$ of~$B$ by the vector~$(b_{r,k_j,m_j})_j$.
We get, after simplification of the factors~$P_{l+*}$ by multilinearity of the determinant,
\begin{equation}\label{eqn:cr-div-det}
  \frac{c_i}{c_r} =\frac{ \tfrac{P_{l+i}}{P_l} \det B_i }{\tfrac{P_{l+r}}{P_l} \det B}.
\end{equation}
So, for all~$i$, the polynomial~$c_i$ divides~$\tfrac{P_{l+i}}{P_l} \det B_i$ and thus
\begin{equation*}
  \delta(c_i) \leqslant i\delta_E + \sum_{j=0, j\neq i}^{\smash{r}} \delta(b_j).
\end{equation*}
With the previous bound~\eqref{eqn:deltabk} on~$\delta(b_i)$ we get
\[  \delta(c_i) \leqslant r\delta(a) + \frac{r(r+1)}{2}\left( \delta(f)+\delta_E \right) + rl \delta_E, \]
which gives the result with Proposition~\ref{prop:deltaE}.
\end{proof}

\section{Complexity}

\begin{table*}[bt]
  \centering
  \begin{tabular}{c *{4}{@{\hspace{2em}}c@{}>{ \itshape(}c<{) }@{ }c}}
    \toprule
    degree of $f$  & \multicolumn{3}{c}{3} & \multicolumn{3}{c}{4} & \multicolumn{3}{c}{5} & \multicolumn{3}{c}{6} \\\midrule
     order of telesc. & \multicolumn{3}{c}{2} & \multicolumn{3}{c}{6} & \multicolumn{3}{c}{12} & \multicolumn{3}{c}{20} \\\midrule
     degree of telesc. $\delta = 1$ &
        32 & 68 & 0.4s &
        153 & 891 & 46s &
        480 & 5598 & 2h &
        1175 & 23180 & 150h\\
     \phantom{degree of telesc}\llap{---\quad}, $\delta = 2$ & %
        66 &    136   & 0.6s &
        336 &   1782  & 140s &
        1092 &  11196 & 7h    &
        ? &     46360 & $\varnothing$ \\
     \phantom{degree of telesc}\llap{---\quad}, $\delta = 3$ & %
        100 &   204 &   0.9s  &
        519 &   2673 &  270s  &
        1704 &  16794 & 13h   &
        ? &     69540 & $\varnothing$ \\\bottomrule
  \end{tabular}
  \caption{Empirical order and degree of the minimal telescoper with regular certificate of a random rational function~$a/f^2$ in~$\QQ(t,x_0,x_1,x_2)$, with~$f$ and~$a$ homogeneous in~$\xx$ satisfying $\deg_\xx a + 3 = 2\deg_\xx f$ and~$\delta(a)$ and~$\delta(f)$ equal to~$\delta$; together with a proved upper bound (with a version of Theorem~\ref{thm:delta-ck} without simplification) and mean computation time (CPU time).}

  \label{tab:orddeg-exp}
\end{table*}

\noindent We assume that~$L$ is the field~$k(t)$ and 
we evaluate the algebraic complexity of the steps of \tReduce\ and \tCreatTel\ in terms of number of arithmetic operations in~$k$.
All the algorithms are deterministic.
For univariate polynomial computations, we use the quasi-optimal algorithms in~\cite{GatGer03}.
For simplicity, we assume that~$d>2$ so that several simplifications occur in the inequalities since~$C_q\leqslant R_q$ and $d>\eexp \approx 2.72$.
\begin{thm}\label{thm:complexity-reg}
Under Hypothesis~\eqref{eqn:regularity} and assuming that $d>2$, Algorithm~\tCreatTel\ run with input $a/f^\ell$ takes
  \begin{equation*}
    \softO\left( \left(d^{5n} + d^{4n}l  + d^{3n}l^2 \left(\tfrac{l}{n}\right)^{2n}\right)\eexp^{3n}\delta\right)
  \end{equation*}
arithmetic  operations in~$k$, where $\delta$ is the larger of~$\delta(a)$ and~$\delta(f)$, uniformly in all the parameters. Asymptotically with~$l$ and~$n$ fixed, this is~$\softO\left(d^{5n}\delta\right)$. 
\end{thm}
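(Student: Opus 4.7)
My plan is to split the runtime of \tCreatTel\ into three phases and bound each of them with the machinery of \S\ref{sec:eff-bounds}: (a) precomputation of the splits $\psi_{id}$; (b) the reductions $G_0 = [F]$ and $G_{i+1} = [\partial_t G_i]$ for $0 \le i < r$, with $r\le d^n$ by Corollary~\ref{coro:hdim}; (c) the final linear dependency search among $G_0,\ldots,G_r$. The toolbox I would invoke consists of the Macaulay-matrix size bound $R_q \le (q\eexp/(n+1))^n$, the denominator-degree bound $\delta_E = \bigO(d^n \eexp^n \delta(f))$ from Proposition~\ref{prop:deltaE}, the output-degree bound of Proposition~\ref{prop:bound-denom} (together with~\eqref{eqn:deltabk}), and fast polynomial-matrix linear algebra, which solves an $N \times N$ system over $k[t]$ with $t$-degree-$D$ entries in $\softO(N^\omega D)$ operations over $k$ for some $\omega \le 3$.

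First I would dispatch phases (a) and (c). Phase~(a) only requires inverting matrices of size at most $R_D = \bigO((d\eexp)^n)$ with entries of $t$-degree $\bigO(\delta(f))$, since by Proposition~\ref{prop:deltaE} the splits $\psi_q$ with $q>D$ are reconstructed from $\psi_D$; this costs $\softO(R_D^\omega \delta) = \softO(d^{3n}\eexp^{3n}\delta)$, absorbed into the first summand. For phase~(c), by Proposition~\ref{coro:hbasis} the vectors $G_i$ live in the fixed quotient $\Hfp$ of dimension at most $d^n$, and by~\eqref{eqn:deltabk} their coordinates have $t$-degree $\bigO((l+r)\delta_E) = \bigO((l+d^n)d^n\eexp^n\delta)$ after clearing $P_{l+r}$; a rank search on a matrix of that size and degree costs $\softO(d^{n\omega}\cdot (l+d^n)d^n \eexp^n\delta) = \softO((d^{5n} + d^{4n}l)\eexp^{3n}\delta)$, comfortably fitting in the first two summands.

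The hard part will be phase~(b), where one must carefully combine matrix-size growth with coefficient-degree growth. A Reduce call on a rational function of pole order $p$ performs $p-1$ applications of the maps $\pi_{jd}$ and $\psi_{jd}$ to intermediate polynomials of $t$-degree $\bigO(\delta(a) + p\delta_E)$ by Proposition~\ref{prop:bound-denom}. For $i\ge 1$, the reduced form $G_i$ has pole order at most $n$ (Proposition~\ref{coro:hbasis}), so the successor reduction only requires $\bigO(n)$ applications at level $\le D$, each of cost $\bigO(R_D^2 (i+l)\delta_E)$; summing over $i < r \le d^n$ gives a contribution $\softO((d^{5n} + d^{4n}l)\eexp^{3n}\delta)$. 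The initial reduction is more delicate when $l > n+1$: the Macaulay matrix at level $jd$ grows in size as $R_{jd}^2 \le R_D^2 (j/n)^{2n}$, and summing over $l$ steps with intermediate degrees $\bigO(l\delta_E)$ yields $\softO(l^2 (l/n)^{2n} R_D^2 \delta_E) = \softO(d^{3n}l^2 (l/n)^{2n}\eexp^{3n}\delta)$, the third summand. Adding the three contributions produces the stated bound, and the asymptotic simplification to $\softO(d^{5n}\delta)$ for fixed $l,n$ follows immediately.
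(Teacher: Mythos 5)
Your three-phase decomposition and all of the quantitative estimates match the paper's proof closely: the precomputation bound $\softO(\eexp^{3n}d^{3n}\delta(f))$, the per-step reduction cost $\softO(R_{\ell d}C_{\ell d}(\delta(a)+\ell\delta_E))$ leading to $\rho(\ell,\delta(a))=\softO\bigl(\ell(\eexp d\ell/(n+1))^{2n}(\delta(a)+\ell\delta_E)\bigr)$, the observation that after the first reduction every $G_i$ has pole order at most $n$ so that the subsequent reductions cost $\rho(n+1,\cdot)$, and the isolation of the initial reduction of $a/f^\ell$ as the source of the third summand $d^{3n}l^2(l/n)^{2n}\eexp^{3n}\delta$ are all exactly the paper's argument.

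There is one genuine gap, and it is precisely the point the paper singles out as delicate. Algorithm~\ref{algo:ct} tests $\mop{rank}_L(G_0,\dotsc,G_i)$ at \emph{every} iteration of the loop, and a direct implementation of that test costs $\softO(i^{\omega-1}\eexp^nd^n\delta(b'_i))$ at step $i$; summed over $i\le r$ with $r\le d^n$ this is of order $\eexp^{2n}(d^{6n}+d^{5n}l)\delta$, which exceeds the claimed bound by a factor $d^n$. You sidestep this by describing phase (c) as a single ``final'' dependency search among $G_0,\dotsc,G_r$ --- but $r$ is an output of that search, not an input, so your phase (b), which computes reductions ``for $0\le i<r$'', presupposes knowledge of when to stop. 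The paper resolves this by performing the rank test only when $i$ is a power of $2$ (so the largest index reached is below $2r$), and by noting that $\mop{rank}_L(G_0,\dotsc,G_i)=\min(r,i+1)$, so a single failed full-rank test reveals the exact order $r$, after which one solve of the $r\times r$ system finishes. Your plan can be repaired along the same lines (or by unconditionally computing all $d^n+1$ reductions and reading $r$ off one rank computation, which stays within your stated budget), but as written the termination of the main loop and the cost of detecting the first dependency are not accounted for, and the naive reading of the algorithm does not meet the stated complexity.
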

Note that while this may seem a huge complexity, it is not so bad when compared to the size of the output, which seems to be, empirically, comparable to~$d^{3n}\delta$, with~$n$ fixed and~$l=1$. 
Note also that for~$n=1$, the complexity improves over that of the algorithm based on Hermite's reduction studied in~\cite{BosCheChy10}, thanks to our avoiding too many rank computations.

\begin{remark}\label{rem:size-cert}\normalfont
Let $a/f$ be a generic fraction with a telescoper~$T$ and a regular certificate $A$.
We claim that the size of~$A$ is asymptotically bounded below by~$d^{(1-o(1))n^2} \delta$, making it crucial to avoid the computation of certificates.
Indeed, the fraction~$T(a/f)$ writes~$b/f^{r+1}$, where~$r$ is the order of~$T$.
The number of monomials of~$b$ in~$\xx$ is~$\binom{(r+1)d-1}{n}\approx(rd)^n/n!$.
  If~$a$ is generic then~$r$ is at least~$\dim\Hfp$, by the Cyclic Vector Theorem;
  and if~$f$ is generic, it satisfies~\eqref{eqn:regularity} and~$\dim\Hfp$ is about~$d^n$, by Corollary~\ref{coro:hdim}.
  Since~$T(a/f)$ equals~$\sum_i\partial_i(A_i)$, the size of the~$A_i$ has at least the same order of magnitude than that of~$T(a/f)$; hence the claim.
\end{remark}

\subsection{Primitives of linear algebra}\label{sec:pol-linalg}

\noindent The complexity of Algorithm~\ref{algo:ct} lies in operations on matrices with polynomial coefficients.
Let~$A\in k[t]^{n\times m}$ have rank~$r$ and coefficients of degree at most~$d$.
One can compute~$r$, a basis of~$\ker A$ and a maximal rank minor in~$\softO(n m r^{\omega-2} d)$ operations in~$k$~\cite{Zho13}.
A maximal rank minor can be inverted in complexity~$\softO(r^3 d)$~\cite{JeaVil05}.
In particular, a matrix~$B$ such that~$ABA=A$ can be computed in~$ \softO(n m r^{\omega-2} d + r^3 d)$ operations in~$k$, or~$\softO(n^2md)$, using~$r\leqslant n,m$ and~$\omega\leqslant3$.

A matrix~$A$ with rational entries is represented with the l.c.m.~$g$ of the entries and the polynomial matrix~$g A$. 
\subsection{Precomputation}

\noindent Algorithm~\ref{algo:ct} needs the splits~$\psi_{id}$ for~$i$ from~$1$ to the larger of~$n+1$ and~$l$.
Following~\S\ref{sec:red-mod-jac}, it is enough to compute~$\psi_{id}$ for~$i$ between~$1$ and~$n+1$, each for a cost of~$\softO(R_{id}C_{id}^2\delta(f))$ operations in~$k$,
and then~$\psi_{id}$ can be obtained with no further arithmetic operation for~$i>n+1$.
Thus the precomputation needs~$\softO\left( \eexp^{3n}d^{3n}\delta(f) \right)$  operations in~$k$.

\subsection{Reduction}

\noindent Let~$\rho(\ell, \delta(a))$ be the complexity of the variant of the algorithm $\Reduce$ based on linear algebra with input a rational function~$a/f^\ell$.
The procedure first computes~$\psi_{\ell d}(a)$. Since~$\psi_{\ell d}$ is precomputed, it is only the product of a matrix of dimensions~$C_{\ell d}$ by~$R_{\ell d}$ with the vector of coefficients of~$a$ in a monomial basis.
The elements of the matrix have degree at most~$\delta_E$ and the elements of the vector have degree at most~$\delta(a)$.
Thus the product has complexity~$\softO(R_{\ell d}C_{\ell d}(\delta(a)+\delta_E))$.
Secondly, the procedure computes~$r$ as~$\pi_{\ell d}(a)$ knowing~$\psi_{\ell d}(a)$; this has the same complexity.
Thirdly, it computes~$F_1$, computation whose complexity is dominated by that of the first step. And lastly it computes~$\Reduce(F_1)$, which has complexity bounded by~$\rho(\ell-1, \delta(a)+\delta_E)$.
Unrolling the recurrence leads to
\[ \rho(\ell, \delta(a)) = \softO\left( \ell \left(  \frac{\eexp d \ell}{n+1} \right)^{2n} \left( \delta(a) + \ell \delta_E \right) \right). \]

\subsection{Main loop}

\noindent The computation of~$G_0$ has complexity~$\rho(\ell, \delta(a))$.
Next, $G_i$ has shape given by~\eqref{bound:deg-diff}, and is differentiated before being reduced, so that the cost of 
the computation of~$G_{i+1}$ is at most~$\rho(n+1, \delta(a)+(i+2n)\delta(f)+(i+l)\delta_E)$.
Summing up, the computation of~$G_0,\dotsc,G_r$ has a complexity
\begin{equation}\label{eq:complexity_all_reductions}
 \rho(l,\delta(a))+ \softO\left( (\eexp d)^{2n}r \left (\delta(a)+r\delta(f)+(r+l)\delta_E \right)  \right). \end{equation}

During the~$i$th step, the procedure computes the rank of~$i+1$ vectors with~$\bigO(\eexp^n d^n)$ coefficients of degree~$\delta(b'_i)$ and computes a linear dependence relation if there is one.
This is done in complexity~$\softO\left( i^{\omega-1}\eexp^nd^n \delta(b'_i) \right)$.
This step is quite expensive and doing it for all~$i$ up to~$r$ would ruin the complexity.
It is sufficient to perform this computation only when~$i$ is a  power of~2 so that the maximal~$i$ which is used is smaller than~$2r$.
When the rank of the family is not full, we deduce from it the exact order~$r$ and perform the computation in that order.
Indeed, the rank over~$L$ of~$G_0,\dotsc,G_i$ is the least of~$r$ and~$i$.
This way, finding the rank and solving has cost~$\softO(r^{\omega-1}\eexp^nd^n(\delta(b'_r)+\delta(b'_{2r})))$. In view of~\eqref{eqn:deltabk'} and since~$r\leqslant d^n$ and~$\omega\leqslant 3$, the complexity of that step is bounded by~\eqref{eq:complexity_all_reductions}.
Adding the cost of the precomputation and using the bounds of the previous section leads to Theorem~\ref{thm:complexity-reg}.

\section{Affine singular case}
\label{sec:singular}

\noindent Let~$L$ denote the field~$k(t)$.
Let~$F_\aff$ be a rational function in~$L(x_1,\dotsc,x_n)$, written as~$a/f_\aff$.
We do not assume that~$F_\aff$ is homogeneous, nor that~$f_\aff$ satisfies a regularity property.
Let~$d_\aff$ be the total degree of~$f_\aff$ w.r.t.~$\xx$.

In this section we show a deformation technique that regularizes singular cases.
In particular, it allows to transfer the previous results to the general case and obtain the following bounds.
The algorithm is again based on linear algebra.

\begin{thm}\label{thm:bound-affine}
  The function~$F_\aff$ admits a telescoper, with regular certificate, of order at most~$d^n$ and degree
  \[ \bigO\left( d_\pr^n \delta(a) + d_\pr^{3n}\eexp^n \delta(f_\aff) \right), \]
  where~$d_\pr$ is~$\max(d_\aff,\deg_{\xx}a + n + 1)$.
  This telescoper can be computed in complexity~$\softO\left(\eexp^{3n}d_\pr^{8n}\delta \right)$, with~$\delta$ the larger of~$\delta(a)$ and~$\delta(f_\aff)$.
\end{thm}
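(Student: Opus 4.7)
My plan is to reduce to the projective regular setting of \S\ref{sec:prop-grif}--\ref{sec:eff-bounds} by a homogenize-and-deform procedure, and then specialize the deformation parameter back to~$0$. Set $d = d_\pr$. Introduce a new variable $x_0$ and homogenize $f_\aff$ to a polynomial $f \in L[\xx]$ of degree exactly $d$ (padding by a power of $x_0$ if $d_\aff < d$), and $a$ to $\tilde a \in L[\xx]$ of degree $d - (n+1)$, so that $\tilde a/f$ lies in $\Lxfp$ and restricts to $F_\aff$ on $x_0 = 1$. Since $f$ need not satisfy~\eqref{eqn:regularity}, fix a Fermat-type polynomial $g = x_0^d + \dotsb + x_n^d$ and introduce a transcendental parameter $\varepsilon$, setting $\tilde f = f + \varepsilon g$. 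Over the field $L(\varepsilon) = k(t,\varepsilon)$ the hypersurface $V(\tilde f) \subset \PP^n$ is smooth: its singular locus is cut out by a resultant that is a nonzero polynomial in~$\varepsilon$, so $\tilde f$ satisfies~\eqref{eqn:regularity}.

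\textbf{Regular case, specialization and descent.} Running \tCreatTel\ on $\tilde a/\tilde f \in L(\varepsilon)[\xx, 1/\tilde f]_{-n-1}$ produces a telescoper $\tilde T(t, \varepsilon, \partial_t)$ of order at most $d^n$ with regular certificate $(\tilde A_0, \dotsc, \tilde A_n)$. With $\ell = 1$, $\delta(\tilde a) = \delta(a)$, $\delta_t(\tilde f) = \delta_t(f_\aff)$, $\delta_\varepsilon(\tilde f) = 1$, Theorem~\ref{coro:asy-deg} bounds the bidegree of each coefficient of $\tilde T$ by $O(d_\pr^n \delta(a) + d_\pr^{3n} \eexp^n \delta(f_\aff))$. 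After clearing denominators in $k[t, \varepsilon]$ and factoring out the largest common power of $\varepsilon$ from
\[
\tilde T\!\left(\frac{\tilde a}{\tilde f}\right) = \sum_{i=0}^{n} \partial_i \tilde A_i,
\]
substitute $\varepsilon = 0$ to obtain a nonzero operator $T$ satisfying $T(\tilde a/f) = \sum_i \partial_i A_i$ with each $A_i$ having poles only along $V(f)$. Restricting to $x_0 = 1$ and using the identification of $\Lxfp$ with regular $n$-forms on $\PP^n \setminus V(f)$ recalled after Theorem~\ref{thm:griffiths}, which sends projective exact forms to affine exact forms, yields the desired telescoper with regular certificate for $F_\aff$; the $t$-degree bound on $\tilde T$ is inherited by $T$, and the order does not grow under specialization.

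\textbf{Complexity and main obstacle.} Viewing $\varepsilon$ as a new base parameter, Theorem~\ref{thm:complexity-reg} bounds the cost of the above call to \tCreatTel\ by $\softO(d_\pr^{5n} \eexp^{3n} \delta)$ operations in $k(\varepsilon)$. Propositions~\ref{prop:deltaE} and~\ref{thm:delta-ck} bound the $\varepsilon$-degree of every object arising in the algorithm by $O(d_\pr^{3n} \eexp^n)$, so each operation in $k(\varepsilon)$ can be performed in $\softO(d_\pr^{3n} \eexp^n)$ operations in $k$, yielding the announced $\softO(\eexp^{3n} d_\pr^{8n} \delta)$ bound. The main subtlety I expect is verifying that specialization at $\varepsilon = 0$ preserves both the nontriviality of the telescoper and the regularity of the certificate; this rests on the whole construction being rational in $\varepsilon$ with denominators (powers of $\tilde f$ and of the $\det B$ appearing in~\eqref{eqn:cr-div-det}) that do not vanish identically at $\varepsilon = 0$, so that dividing out the common $\varepsilon$-content before specializing gives a well-defined, nontrivial identity.
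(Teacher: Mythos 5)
Your first two parts (homogenization, the Fermat deformation $f+\varepsilon\sum_i x_i^{d}$, applying Theorem~\ref{coro:asy-deg} over $k(t,\varepsilon)$, normalizing the $\varepsilon$-content and specializing at $\varepsilon=0$, then dehomogenizing via Euler's relation) follow the paper's route almost exactly; this is the content of Proposition~\ref{prop:telesc-regtosing}, and your Laurent-expansion remark at the end correctly identifies the one point that needs care there. The order and degree bounds are fine.

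The complexity claim, however, has a genuine gap. You run \tCreatTel{} over the base field $k(\varepsilon)$, invoke Theorem~\ref{thm:complexity-reg} to get $\softO(d_\pr^{5n}\eexp^{3n}\delta)$ operations in $k(\varepsilon)$, and then multiply by a per-operation cost of $\softO(d_\pr^{3n}\eexp^n)$ justified by degree bounds on the \emph{outputs} (Propositions~\ref{prop:deltaE} and~\ref{thm:delta-ck}). But the operation count in Theorem~\ref{thm:complexity-reg} rests on the polynomial-matrix primitives of \S\ref{sec:pol-linalg} (rank, kernel, maximal-rank minor, inversion), which are algorithms for \emph{univariate} polynomial matrices over $k[t]$; they are not straight-line programs whose every intermediate value is one of the quantities whose $\varepsilon$-degree you have bounded. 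Replacing $k$ by $k(\varepsilon)$ means every rank decision, pivot choice and normalized fraction inside those routines lives in $k(\varepsilon)$, and nothing in the cited propositions controls the $\varepsilon$-degrees of those intermediates. (Also, even granting your per-operation cost, the product is $\softO(\eexp^{4n}d_\pr^{8n}\delta)$, not the announced $\eexp^{3n}$.) This is precisely why the paper does \emph{not} work over $k(t,\varepsilon)$: it evaluates $\varepsilon$ at $O(N)$ points $u\in k$ with $N=\eexp^n(d^{3n}+d^{2n}+d^n)$, runs the regular-case algorithm over $k(t)$ at each good point, and reconstructs the coefficients at $\varepsilon=0$ by rational interpolation (Lemma~\ref{prop:rat-reval}). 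That scheme forces two issues you never confront: one must discard the points $u$ where $f_\reg|_{\varepsilon=u}$ violates~\eqref{eqn:regularity} or where the leading coefficient $c_r(t,u)$ vanishes, and one must prove that at the remaining points the locally computed minimal telescoper $T_u$ really is the specialization $T|_{\varepsilon=u}$ of a single object that can be interpolated --- that is the paper's Lemma~\ref{lem:telesc-spe}, which has no counterpart in your argument. To repair your proof you would either have to supply bivariate analogues of the linear-algebra primitives with controlled cost, or adopt the evaluation-interpolation scheme and prove the consistency lemma.
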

 It is easy to see that the \emph{bit} complexity is also polynomial in~$d_\pr^n$.
The dependence in~$n$ of the complexity, with~$\deg_\xx a$ and~$d_\aff$ fixed, can be improved to~$\eexp^{O(n)}$ rather than~$n^{O(n)}$ with a more careful analysis.

\subsection{Homogenization and deformation}

\noindent The regularization proceeds in two steps.
First, let~$F_\pr$ be the homogenization of~$F_\aff$ in degree~$-n-1$, that is
\[ F_\pr = {x_0^{-n-1}}F_\aff\left( \tfrac{x_1}{x_0},\dotsc,\tfrac{x_n}{x_0} \right), \]
which we write~$b/f_\pr$ for some homogeneous polynomials~$b$ and~$f_\pr$. Let~$d_\pr$ denote the degree of~$f_\pr$;
it is given by Theorem~\ref{thm:bound-affine}.
The degrees of~$b$ and $f_\pr$ satisfy the hypothesis of Theorem~\ref{coro:asy-deg}, by construction, but in general~$f_\pr$ does not satisfy Hypothesis~\eqref{eqn:regularity}.
(Although it does generically, as long as~$d_\pr$ equals~$d_\aff$.)
We consider a new indeterminate~$\varepsilon$, the polynomial~$f_\reg$ defined by
\[ f_\reg = f_\pr + \varepsilon\sum_{i=0}^{\smash{n}} x_i^{d_\pr}, \]
and the rational function~$F_\reg$ defined by~$b/f_\reg$.
We could also have perturbed the square-free part of~$f_\pr$ rather than~$f_\pr$, leading to an improvement of the complexity in Theorem~\ref{thm:bound-affine} at the cost of more technical details.

\begin{lem}
The polynomial~$f_\reg$ satisfies Hypothesis~\eqref{eqn:regularity} over~$L(\varepsilon)$, that is~$\Lwx/\!\Jac f_\reg$ has finite dimension.
\end{lem}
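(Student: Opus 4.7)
The plan is to reduce Hypothesis~\eqref{eqn:regularity} for $f_\reg$ to a statement of emptiness of a projective scheme, then to exhibit this emptiness by a deformation argument that degenerates to a Fermat-type hypersurface.

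First I would translate the hypothesis. Since $f_\reg$ is homogeneous of positive degree $d_\pr$, Euler's identity gives $f_\reg \in \Jac f_\reg$, and the projective Nullstellensatz then makes~\eqref{eqn:regularity} over $L(\varepsilon)$ equivalent to $V(\Jac f_\reg) = \varnothing$ in $\PP^n$ over $\overline{L(\varepsilon)}$. Setting $\eta := 1/\varepsilon$ (so that $L(\varepsilon) = L(\eta)$) and multiplying each generator $\partial_i f_\reg$ by the unit $\eta$, the ideal $\Jac f_\reg$ is equally well generated by
\[ g_i \;=\; \eta\, \partial_i f_\pr \;+\; d_\pr\, x_i^{d_\pr - 1}, \quad i = 0, \dotsc, n, \]
whose coefficients now lie in $L[\eta]$.

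Next I would view the $g_i$ as defining a closed subscheme $Z \subset \PP^n_L \times \mathbb{A}^1_\eta$, homogeneous of degree $d_\pr - 1$ in $\xx$, and consider the projection $\pi \colon Z \to \mathbb{A}^1_\eta$, which is proper because $\PP^n$ is. The fiber $\pi^{-1}(\eta = 0)$ is cut out in $\PP^n$ by $(x_0^{d_\pr-1}, \dotsc, x_n^{d_\pr-1})$ and is empty: a common projective zero of these would have all coordinates equal to $0$. By properness, $\pi(Z)$ is Zariski-closed in $\mathbb{A}^1_\eta$; since it avoids $0$, it is a proper closed subset and in particular misses the generic point. The generic fiber of $\pi$ is therefore empty, which is exactly the statement that $(g_0, \dotsc, g_n)$ defines the empty projective scheme over $L(\eta) = L(\varepsilon)$, proving~\eqref{eqn:regularity} for $f_\reg$.

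The only nontrivial input is the properness of $\pi$, equivalently the elimination-theoretic fact that projections away from $\PP^n$ are closed. The Fermat-type specialization at $\eta = 0$ is immediate, and the translation between ``empty projective scheme'' and ``finite-dimensional quotient'' is the standard projective Nullstellensatz applied to a homogeneous ideal generated in positive degree. I foresee no serious obstacle.
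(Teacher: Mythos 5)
Your proof is correct and is essentially the paper's argument made explicit: the paper's entire proof is ``This is true for $\varepsilon=\infty$, so it is generically true,'' i.e.\ specialize at $\eta=1/\varepsilon=0$ to the Fermat-type generators $d_\pr x_i^{d_\pr-1}$ and invoke genericity, which is exactly what you do via properness of $\PP^n$ and the projective Nullstellensatz. You have simply supplied the details the paper leaves implicit; no discrepancy in approach.
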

\begin{proof}
This is true for~$\varepsilon=\infty$, so it is generically true. 
\end{proof}
Now, Theorem~\ref{coro:asy-deg} gives bounds on the order and degree of a telescoper of~$F_\reg$, which is in~$L(\varepsilon)[\xx,\tfrac{1}{f_\reg}]_{\smash{-n-1}}$. The proof of Theorem~\ref{thm:bound-affine} is concluded by the following.
\begin{prop}\label{prop:telesc-regtosing}
  If~$T$ in~$L[\varepsilon]\langle \partial_t \rangle$ is a telescoper of~$F_\reg$ with regular certificate, then so is~$T|_{\varepsilon=0}$ for~$F_\aff$.
\end{prop}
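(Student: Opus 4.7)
The plan is to pass from the $\varepsilon$-deformed telescoping identity over $L(\varepsilon)$ to an identity at $\varepsilon=0$ by extracting the $\varepsilon^0$-coefficient of both sides inside a suitable formal $\varepsilon$-power-series ring, and then to dehomogenize using Euler's relation. Starting from a regular certificate $A_0,\dotsc,A_n \in L(\varepsilon)[\xx,\tfrac{1}{f_\reg}]_{-n}$ for $T$, I would use the decomposition $f_\reg = f_\pr + \varepsilon g$ with $g = \sum_i x_i^{d_\pr}$ and the geometric expansion
\[
\frac{1}{f_\reg} \;=\; \frac{1}{f_\pr}\sum_{k\ge 0}\bigl(-\varepsilon g/f_\pr\bigr)^k
\]
to embed $L(\varepsilon)[\xx,1/f_\reg]$ into the Laurent-series ring $L((\varepsilon))[\xx,1/f_\pr]$. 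Inside the latter, each $A_i$ admits a Laurent expansion $\sum_k A_{i,k}\varepsilon^k$ with coefficients $A_{i,k}\in L[\xx,1/f_\pr]_{-n}$, and $T(F_\reg)$ lies in $L[[\varepsilon]][\xx,1/f_\pr]$ (nonnegative $\varepsilon$-valuation) with value $T|_{\varepsilon=0}(F_\pr)$ at $\varepsilon=0$.

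Matching the $\varepsilon^0$-coefficients on both sides of $T(F_\reg)=\sum_i\partial_iA_i$, and using that the $\partial_i$ commute with coefficient extraction, yields a projective identity
\[
T|_{\varepsilon=0}(F_\pr) \;=\; \sum_{i=0}^n \partial_i A_{i,0}, \qquad A_{i,0}\in L[\xx,\tfrac{1}{f_\pr}]_{-n}.
\]
To descend to the affine setting I would restrict to $x_0=1$: for $i\ge 1$ one simply has $(\partial_iA_{i,0})|_{x_0=1}=\partial_i(A_{i,0}|_{x_0=1})$, while for $i=0$, Euler's relation applied to the degree-$(-n)$ homogeneous function $A_{0,0}$ rewrites $(\partial_0A_{0,0})|_{x_0=1}$ as $-\sum_{j=1}^n\partial_j(x_jA_{0,0}|_{x_0=1})$. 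Combining these, and using $f_\pr|_{x_0=1}=f_\aff$ and $F_\pr|_{x_0=1}=F_\aff$, gives
\[
T|_{\varepsilon=0}(F_\aff) \;=\; \sum_{j=1}^n \partial_j\bigl((A_{j,0}-x_jA_{0,0})|_{x_0=1}\bigr),
\]
in which each $B_j=(A_{j,0}-x_jA_{0,0})|_{x_0=1}$ has denominator a power of $f_\aff$, hence a regular certificate.

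The delicate step is the first one: a priori the $A_i\in L(\varepsilon)[\xx,1/f_\reg]$ can have $\varepsilon$-poles at $\varepsilon=0$ (their common $\varepsilon$-denominator in $L(\varepsilon)$ may vanish there), so naive specialization at $\varepsilon=0$ is not available. The key point is that once everything is realized in $L((\varepsilon))[\xx,1/f_\pr]$ via the geometric expansion above, coefficient extraction is unambiguous and commutes with the $\partial_i$, and the nonnegative $\varepsilon$-valuation of $T(F_\reg)$ in this ring forces the $\varepsilon^0$-coefficient identity to have the desired form with coefficients $A_{i,0}\in L[\xx,1/f_\pr]_{-n}$.
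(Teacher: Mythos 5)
Your proof is correct and follows essentially the same route as the paper's: expand the certificate as a Laurent series in $\varepsilon$ with coefficients in $L[\xx,\tfrac{1}{f_\pr}]_{-n}$, extract the $\varepsilon^0$-coefficient of the telescoping identity (using that $T(F_\reg)$ has nonnegative $\varepsilon$-valuation and that the $\partial_i$ commute with coefficient extraction), then set $x_0=1$ and absorb the $\partial_0$ term via Euler's relation. The only difference is that you spell out the geometric-series mechanism behind the Laurent expansion, which the paper leaves implicit.
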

\begin{proof}
By assumption, $T(F_\reg)$ equals~$\sum_{i=0}^n\partial_i g_i/f_\reg^p$ for some integer~$p$ and polynomials~$g_i$ in~$\Lwx$.
Each~$g_i/f_\reg^p$ can be expanded in Laurent series in~$\varepsilon$ as~$\sum_{j\geqslant N} h_{ij}\varepsilon^j$ for some possibly negative integer~$N$ and rational functions~$h_{ij}$ in~$L[\xx,\tfrac{1}{f_\pr}]_{-n}$.
Similarly, we can write the operator~$T(F_\reg)$ as~$T|_{\varepsilon=0}(F_\pr)+\varepsilon\sum_{j\geqslant 0} b_j \varepsilon^j$ for some rational functions~$b_j$ in~$L[\xx,\tfrac{1}{f_\pr}]$.
Since the derivations~$\partial_i$ commute with~$\varepsilon$, it is clear that~$T|_{\varepsilon=0}(F_\pr)$ equals~$\sum_{i=0}^n \partial_i h_{i0}$.
Next, in this equality, $x_0$ can be evaluated to~1 to give
\[ T|_{\varepsilon=0}(F_\aff) = (\partial_0 h_{00})|_{x_0=1}+ \sum_{i=1}^{\smash{n}} \partial_i (h_{i0}|_{x_0=1}). \]
Euler's relation for~$h_{00}$ gives (with the index~$00$ dropped)
\begin{align*}
 (\partial_0h)|_{x_0=1}  &= - \sum_{i=1}^{\smash{n}} \partial_i( x_i h|_{x_0=1}),
\end{align*}
proving that~$(\partial_0h)|_{x_0=1}$ is in~$D_{f_\aff}$. Thus, so is $T|_{\varepsilon=0}(F_\aff)$ and the proof is complete.
\end{proof}
Nevertheless, a telescoper obtained in this way does not need to be minimal, even starting from a minimal one for the perturbed function~$F_\reg$.
This is unfortunate because in presence of singularities the dimension of~$\Hfp$ can collapse when compared to the generic order given by Corollary~\ref{coro:hdim}.

\subsection{Algorithm and complexity}

\noindent The algorithm is based on Proposition~\ref{prop:telesc-regtosing}.
We use an evaluation-interpolation scheme to control the complexity.
Let the operator~$T$ in~$k(t,\varepsilon)\langle \partial_t \rangle$ be the minimal telescoper of~$F_\reg$, written as~$\partial_t^r + \sum_{k=0}^{r-1} \frac{c_k}{c_r}\partial_t^k$. It is the output of~$\CreatTel$ applied to~$F_\reg$. We aim at computing~$(\varepsilon^\alpha T)|_{\varepsilon=0}$, where~$\alpha$ is such that this evaluation is finite and not zero.

Proposition~\ref{prop:telesc-regtosing}, slightly adapted, shows that~$T|_{\varepsilon=u}$ is a telescoper with regular certificate of~$F_\reg|_{\varepsilon=u}$ whenever~$c_r(t,u)$ is not zero, even if~$f_\reg|_{\varepsilon=u}$ does not satisfy~\eqref{eqn:regularity}.
When it does, the specialization gives the minimal one:
\begin{lem}\label{lem:telesc-spe}
If~$f_\reg|_{\varepsilon=u}$ satisfies hypothesis~\eqref{eqn:regularity}
and if~$u$ does not cancel~$c_r$,
then~$T|_{\varepsilon=u}$ is the minimal telescoper with regular certificate of~$F_\reg|_{\varepsilon=u}$.
\end{lem}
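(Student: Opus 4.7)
The lemma has two parts. First, that $T|_{\varepsilon=u}$ is a telescoper of $F_\reg|_{\varepsilon=u}$ with regular certificate is already noted just above the statement, as a specialization at $u$ of the argument of Proposition~\ref{prop:telesc-regtosing}; the coefficients of $T$ and of its certificate can be specialized because their $\varepsilon$-denominators all divide $c_r$, and the $\xx$-denominators specialize to powers of $f_\reg|_{\varepsilon=u}$. The real content of the lemma, which I address below, is \emph{minimality}: under the extra hypothesis~\eqref{eqn:regularity} on $f_\reg|_{\varepsilon=u}$, the order of the minimal telescoper with regular certificate of $F_\reg|_{\varepsilon=u}$ is exactly $r$.

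My plan is to use the characterization established in the correctness proof of $\CreatTel$: the order of the minimal telescoper with regular certificate of a rational function $G$ equals the dimension, over the base field, of the span of the Krylov sequence $[G],[\partial_tG],[\partial_t^2G],\dotsc$ in the cohomology $\Hfp$. Write $r^*$ for this invariant applied to $F_\reg|_{\varepsilon=u}$; then $r^*\le r$ is immediate, so the task is to prove $r^*\ge r$. I would fix a basis of $\Hfp[f_\reg]$ of the form $(g_{l,i}/f_\reg^l)$ provided by Proposition~\ref{coro:hbasis} with the $g_{l,i}$ monomials in $\xx$, hence independent of $(t,\varepsilon)$; because $f_\reg$ and $f_\reg|_{\varepsilon=u}$ share degree $d_\pr$ and both satisfy~\eqref{eqn:regularity}, Corollary~\ref{coro:hdim} gives both cohomology spaces the same dimension~$h$ and, for a generic enough monomial choice, the same family remains a basis after specialization. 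Let $M$ be the $r\times h$ matrix over $k(t,\varepsilon)$ whose rows are the coordinates of $[F_\reg],\dots,[\partial_t^{r-1}F_\reg]$ in this basis; by minimality of $r$ over $k(t,\varepsilon)$, $M$ has rank~$r$. Reading Equation~\eqref{eqn:cr-div-det} with $c_r$ in its raw Cramer form $c_r=(P_{l+r}/P_l)\det B$ (rather than after gcd simplification with the $c_i$), the factor $B$ is a specific $r\times r$ submatrix of $M$.

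The heart of the proof is now to show that $c_r(t,u)\ne 0$ implies $\det B(t,u)\ne 0$: this needs the auxiliary factor $P_{l+r}/P_l=\prod_{j=l+1}^{l+r}E_{jd}$ to be non-zero at~$u$, which in turn follows from hypothesis~\eqref{eqn:regularity} at $u$ via Proposition~\ref{prop:deltaE} applied to the Macaulay matrices specialized at $\varepsilon=u$ (these retain their generic rank, so the minors chosen to build the $\psi_q$'s remain invertible at~$u$). Once $\det B(t,u)\ne 0$, the specialization $M|_{\varepsilon=u}$ still has rank~$r$. Since $[\partial_t^iF_\reg]|_{\varepsilon=u}$ and $[\partial_t^iF_\reg|_{\varepsilon=u}]$ both represent $\partial_t^iF_\reg|_{\varepsilon=u}$ modulo $D_{f_\reg|_{\varepsilon=u}}$, they define the same class in $\Hfp[f_\reg|_{\varepsilon=u}]$ by Theorem~\ref{thm:griffiths}, so the Krylov sequence of $F_\reg|_{\varepsilon=u}$ also spans at least~$r$ dimensions, giving $r^*\ge r$. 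Uniqueness of the minimal monic telescoper then identifies $T|_{\varepsilon=u}$ with it.

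The main obstacle is the bookkeeping linking $c_r$ to $\det B$: it requires the unsimplified Cramer form of Equation~\eqref{eqn:cr-div-det} (otherwise a common factor of the $c_i$'s could absorb the vanishing of $\det B$ at~$u$ while leaving $c_r$ non-zero). A secondary subtlety is the need to relate the two reductions $[\,\cdot\,]|_{\varepsilon=u}$ and $[\,\cdot\,|_{\varepsilon=u}]$, which may differ as rational functions but coincide in cohomology thanks to Theorem~\ref{thm:griffiths}.
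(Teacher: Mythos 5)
Your proof is correct and follows essentially the same route as the paper's: preservation of the rank of the Macaulay matrices at $\varepsilon=u$ under Hypothesis~\eqref{eqn:regularity} (hence $E_q|_{\varepsilon=u}\neq 0$ and compatibility of the reduction with specialization), the Cramer-rule link between $c_r$ and $\det B$, and the conclusion that the first $r$ reduced derivatives of $F_\reg$ remain independent after specialization, forcing any telescoper with regular certificate of $F_\reg|_{\varepsilon=u}$ to have order at least~$r$. Your insistence on the unsimplified Cramer form of Equation~\eqref{eqn:cr-div-det} is a legitimate refinement of a step the paper compresses into the divisibility $c_r \mid P_{l+r}\det B$, and your detour through a monomial basis of $\Hfp$ and Theorem~\ref{thm:griffiths} is only a cosmetic variation on the paper's direct use of monomial coordinates of the reduced numerators.
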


\begin{proof}
  We use  the notation of Section~\ref{sec:eff-bounds}, replacing~$f$ by~$f_\reg$ and~$L$ by~$L(\varepsilon)$.
  The operator~$T$ is the output of Algorithm~\ref{algo:ct} applied to~$F_\reg$.
  Since~$f_\reg|_{\varepsilon=u}$ satisfies~\eqref{eqn:regularity}, for all~$d$ the matrix~$\Mat\varphi_d$, with coefficients in~$L[\varepsilon]$, has the same rank as its specialization with~$\varepsilon=u$~\cite[\S 58]{Mac16}.
  Thus, to compute the splits~$\psi_d$ we can choose maximal rank minors of~$\Mat\varphi_d$ that are also maximal rank minors of the specialization.
  When doing so, the reduction~$[\cdot]$ commutes with the evaluation~$\cdot|_{\varepsilon=u}$.
  In particular, the polynomials~$E_q$ do not vanish for~$\varepsilon=u$.

In the proof of Prop.~\ref{thm:delta-ck}, Eq.~\eqref{eqn:cr-div-det} shows that~$c_r$, the leading coefficient of~$T$, divides~$P_{l+r} \det B$. The polynomial~$P_{l+r}$ is a product of several~$E_d$'s, in particular~$P_{l+r}|_{\varepsilon=u}$ is not zero.
  Since~$c_r|_{\varepsilon=u} \neq 0$, the determinant of~$B|_{\varepsilon=u}$ is not zero either.
  Looking at the definition of~$B$ in the proof of Prop.~\ref{thm:delta-ck}, this implies that the~$[\partial^i_t F_\reg]|_{\varepsilon=u}$, for~$i$ between~$0$ and~$r-1$ are free over~$L(\varepsilon)$.
  In particular, a telescoper with regular certificate of~$F_\reg|_{\varepsilon=u}$ has order at least~$r$.
  Since~$T|_{\varepsilon=u}$ is a telescoper of order is~$r$, it is the minimal one.
\end{proof}

We now present the algorithm.
Let~$N$ be~$\eexp^n(d^{3n}+d^{2n}+d^n)$.
By Proposition~\ref{thm:delta-ck}, the polynomials~$c_k$ have degree at most~$N$ in~$\varepsilon$, and at most~$N \delta$ in~$t$.
Choose a set~$U$ of~$4N+1$ elements of~$k$.
Determine the set~$U'$ of elements~$u$ of~$U$ such that~$f_\reg|_{\varepsilon=u}$ satisfies~\eqref{eqn:regularity}.
This step has complexity~$\softO\left( (\eexp d)^{n\omega}\delta|U| \right)$:
The polynomial~$f_\reg|_{\varepsilon=u}$ satisfies~\eqref{eqn:regularity} if and only if~$(\Mat\varphi_{D})|_{\varepsilon=u}$ is full rank.
In particular, if~$f_\reg|_{\varepsilon=u}$ does not satisfy~\eqref{eqn:regularity}, then~$E_{D}|_{\varepsilon=u}$ vanishes.
The polynomial~$E_{D}$ has degree at most~$\eexp^n d^n$ in~$\varepsilon$, by Proposition~\ref{prop:deltaE}, so~$U\setminus U'$ has at most~$\eexp^n d^n$ elements.
For each~$u$ in~$U'$, compute~$\CreatTel(f_\reg|_{\varepsilon=u})$ with leading coefficient normalized to~$1$, denoted by~$T_u$.
This step has complexity~$\softO(d^{5n}\eexp^{3n}\delta |U'|)$, by Theorem~\ref{thm:complexity-reg}.
Determine the subset~$U''$ of~$U'$ where the order of~$T_u$ is maximal.
By Lemma~\ref{lem:telesc-spe}, the complement~$U'\setminus U''$ is formed by~$u$ such that~$c_r(t,u)=0$.
It has at most~$N$ elements since~$c_r$ has degree at most~$N$ in~$\varepsilon$.
For all~$u$ in~$U''$ the operators~$T_u$ and~$T|_{\varepsilon=u}$ coincide.
Thus~$U''$ has at most~$2N + 1$ elements.

The~$r$ rational functions~$\frac{c_k(t,0)}{c_r(t,0)}$ can be computed using Lemma~\ref{prop:rat-reval} in total complexity~$\softO( N^2r\delta )$.
If~$c_r(t,0)$ is zero, we look for the positive integer~$\alpha$ such that the functions~$\varepsilon^\alpha\frac{c_k(t,\varepsilon)}{c_r(t,\varepsilon)}$ are finite for~$\varepsilon=0$ but not zero for at least one~$k$.
The integer~$\alpha$ is at most~$N$ and thus can be found with a binary search, using at most~$\log_2 N+1$ times Lemma~\ref{prop:rat-reval}.

\begin{lem}\label{prop:rat-reval}
  Let~$R$ in~$k(x,y)$ be written~$P/Q$, with~$P$ and~$Q$ polynomials of degree less than~$d_x$ in~$x$ and~$d_y$ in~$y$.
  Given evaluations~$R(x,v)$, for~$2 d_y+1$ elements $v$ of~$k$, the function~$R(x,0)$ (or~$\infty$ if~$Q(x,0)$ vanishes) can be computed using~$\softO(d_x d_y)$ arithmetic operations in~$k$.
\end{lem}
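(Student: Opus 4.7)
The plan is a two-stage evaluation-interpolation: push the data onto a bivariate grid by fast multipoint evaluation, then recover~$R(x,0)$ by two nested Cauchy/Padé interpolations.

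First, I would pick $M = 2d_x+1$ pairwise distinct sample points $w_1,\dotsc,w_M \in k$ (passing to a small extension if $|k|$ is too small). Each given datum $R(x,v_i) \in k(x)$ is represented as a pair (numerator, denominator) of polynomials in~$k[x]$ of degree less than~$d_x$. Applying fast multipoint evaluation to numerator and denominator separately gives $R(w_j,v_i) \in k\cup\{\infty\}$ for all $j$, in $\softO(d_x)$ operations per index~$i$, so $\softO(d_x d_y)$ in total.

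Next, for each fixed~$j$, the $2d_y+1$ values $R(w_j,v_1),\dotsc,R(w_j,v_{2d_y+1})$ determine the rational function $y\mapsto R(w_j,y) = P(w_j,y)/Q(w_j,y)$, which is of type at most~$(d_y,d_y)$. A standard Cauchy interpolation based on the extended Euclidean algorithm in $k[y]$ recovers it in $\softO(d_y)$ operations per~$j$, hence $\softO(d_x d_y)$ overall; an entry $R(w_j,v_i)=\infty$ is treated as the linear constraint that the reconstructed denominator vanishes at~$v_i$. Specialising the result at $y=0$ yields $R(w_j,0)\in k\cup\{\infty\}$. If more than~$d_x$ of these values equal~$\infty$ then $Q(x,0)$ must vanish identically and the output is~$\infty$; otherwise at least $d_x+1$ finite values remain. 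A second Cauchy interpolation in~$x$ from those values reconstructs the $(d_x,d_x)$-type rational function $R(x,0)\in k(x)$ in $\softO(d_x)$ operations, for a grand total of $\softO(d_x d_y)$.

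The only genuine obstacle is the careful handling of sample points where denominators vanish. The polynomial $Q(x,v_i)$ has degree at most~$d_x$ in~$x$ and $Q(w_j,y)$ has degree at most~$d_y$ in~$y$, so in every interpolation step the number of samples discarded or flagged as~$\infty$ is bounded by the relevant degree; the hypothesis of~$2d_y+1$ evaluations (and our use of~$2d_x+1$ sample points in~$x$) is exactly the oversampling needed to absorb these degeneracies, and Cauchy interpolation with infinite values is a standard variant of the algorithm. This makes every step fit into the $\softO(d_x d_y)$ budget.
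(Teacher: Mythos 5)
Your proof is correct and follows essentially the same route as the paper's: evaluate on a $(2d_x+1)\times(2d_y+1)$ grid, reconstruct $R(w_j,y)$ by univariate rational reconstruction for each fixed $w_j$, specialise at $y=0$, and reconstruct $R(x,0)$ from the resulting values. The only difference is that you spell out the handling of poles and infinite values, which the paper dismisses with the remark that rational reconstruction can absorb them.
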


\begin{proof}
  Let~$V$ the set of evaluation points.
  Choose a set~$U$ of~$2d_x+1$ points of~$k$.
  Compute~$R(u,v)$ for~$u\in U$ and~$v \in V$ in~$\softO(d_x d_y)$ operations.
  Note that there is no need to check that the elements of~$U$ are not poles of the~$R(x,v)$: univariate rational reconstruction can handle that.
  Use univariate rational reconstruction to compute~$R(u,y)$, for~$u$ in~$U$, in complexity~$\softO(d_y |U|)$ operations.
  Reconstruct~$R(x,0)$ in complexity~$\softO(d_x)$ from the evaluations~$R(u,0)$.
\end{proof}

\section{Experiments}

\noindent A basic implementation of the algorithm \tCreatTel\ has been written in Maple 16.
As it uses only Maple primitives to compute with polynomial matrices, it is certainly too basic to reflect the complexity given in Theorem~\ref{thm:complexity-reg}.

Table~\ref{tab:orddeg-exp} presents empirical results for some generic rational functions, with~$n=2$. 
The bound on the order are generically exact as expected;
however the bound on the degree is not very sharp.
For~$n=1$ and~$\delta(a)$ fixed, a careful study~\cite{BosCheChy10} proves that the degree of the minimal telescoper is~$\bigO(d^2\delta)$, which is tighter than the~$\bigO(d^3\delta)$ given by Theorem~\ref{coro:asy-deg}.
Analogy, as well as numerical evidence and theoretical clues, lead us to think that for general~$n$, the asymptotic behavior can be improved from~$\bigO(d^{3n}\delta)$ to~$\bigO(d^{2n}\delta)$.

The relative cost of each step of Algorithm~\ref{algo:ct} in the computation of telescopers of Table~\ref{tab:orddeg-exp}, on the example of the telescoper of degree~12 and degree~1092 of a generic function~$a/f^2$ as described in Table~\ref{tab:orddeg-exp}, that is computed in about~7 hours breaks down as follows:
The computation of splits of Macaulay matrices takes about~1\% of the time, the reduction steps about~40\%, and the final solving  about~60\% of the time.
More efficient matrix multiplication and system resolution over univariate polynomials could improve speed dramatically.
We have not been able to compute more than the first column of Table~\ref{tab:orddeg-exp} with methods and programs in~\cite{Kou10,CheKauSin12}.

On the other hand, the regularity hypothesis~\eqref{eqn:regularity} is restrictive in applications:
Even though generic polynomials satisfy this hypothesis, examples with physical or combinatorial meaning usually do not.
The method shown in Section~\ref{sec:singular} is only of theoretical interest.
By contrast, the algorithm for the regular case is very efficient in practice.%

\scriptsize
\bibliographystyle{abbrv}
\bibliography{BoLaSa-issac}

\end{document}